\definecolor{webgreen}{rgb}{0,0.4,0}
\definecolor{webbrown}{rgb}{0.6,0,0}
\definecolor{purple}{rgb}{0.5,0,0.25}
\definecolor{darkblue}{rgb}{0,0,0.7}
\definecolor{darkred}{rgb}{0.7,0,0}
\newcommand{\ignore}[1]{}
\ifundef{\abstract}{}{\patchcmd{\abstract}%
    {\quotation}{\quotation\noindent\ignorespaces}{}{}}
\newtheorem{lemma}{{\sc Lemma}}
\newtheorem{corollary}{{\sc Corollary}}
\newtheorem{theorem}{{\sc Theorem}}
\newtheorem{definition}{{\sc Definition}}
\newtheorem{example}{{\sc Example}}
\newtheorem{fact}{{\sc Fact}}
\crefname{claim}{claim}{claims}
\crefname{fact}{fact}{facts}
\crefname{algorithm}{algorithm}{algorithms}
\crefname{observation}{observation}{observations}
\crefname{equation}{equation}{equations}
\newenvironment{proof}{\noindent {\em Proof\/}:\enspace}
{\hfill $\blacksquare{}$ \medskip \\}
\DeclareMathOperator*{\argmax}{\arg\!\max}
\title{\bf Game Theoretic Analysis of Production-Management Effort Distribution in Organizational Networks
\thanks{A preliminary version of this work has appeared as an abstract in the conference on Autonomous Agents and Multi-Agent Systems (AAMAS), 2014. The authors would like to thank Y. Narahari, David C. Parkes, Arunava Sen, and Panos Toulis for useful discussions. This work is supported by a Tata Consultancy Services doctoral fellowship.}
}
\author[1]{Swaprava Nath}
\author[2]{Balakrishnan (Murali) Narayanaswamy}
\affil[1]{\small Carnegie Mellon University, email: {\tt swapravn@cs.cmu.edu} (corresponding author)}
\affil[2]{\small Amazon.com, Inc., email: {\tt muralibn@amazon.com}}
\date{}
\begin{document}
\maketitle

%% Title, authors and addresses

%% use the tnoteref command within \title for footnotes;
%% use the tnotetext command for the associated footnote;
%% use the fnref command within \author or \address for footnotes;
%% use the fntext command for the associated footnote;
%% use the corref command within \author for corresponding author footnotes;
%% use the cortext command for the associated footnote;
%% use the ead command for the email address,
%% and the form \ead[url] for the home page:
%%
%% \title{Title\tnoteref{label1}}
%% \tnotetext[label1]{}
%% \author{Name\corref{cor1}\fnref{label2}}
%% \ead{email address}
%% \ead[url]{home page}
%% \fntext[label2]{}
%% \cortext[cor1]{}
%% \address{Address\fnref{label3}}
%% \fntext[label3]{}

% For research notes, remove the comment character in the line below.
% \researchnote

% \maketitle

\begin{abstract}
Organizations consist of individuals connected by their responsibilities, incentives, and reporting structure. These connections are aptly represented by a network, hierarchical or other, which is often used to divide tasks. A primary goal of the organization as a whole is to maximize the net productive output. Individuals in these networks trade off between their productive and managing efforts to perform these tasks and the trade-off is influenced by their positions and share of rewards in the network. Efforts of the agents here are substitutable, e.g., the increase in the productive effort by an individual in effect reduces the same of some other individual in the network, who now puts their efforts into management. The management effort of an agent improves the productivity of certain other agents in the network.

In this paper, we carry out a detailed game-theoretic analysis of individual's equilibrium split of efforts into multiple components when connected over a network. We provide a design recipe of the reward sharing scheme that maximizes the net productive output. Our results show that under the strategic behavior of the agents, it may not always be possible to achieve the optimal output using an idea from game theory called the {\em price of anarchy}.
\if 0
In social or organizational networks, it is often observed that
different individuals put different levels of production effort depending on their position in the network. One possible reason is reward sharing, which incentivizes particular agents to spend effort in sharing information with others and increasing their productivity. We model the effort level in a network as a strategic decision made by an agent on how much effort to expend on the complementary tasks of information sharing and production. We conduct a game-theoretic analysis of incentive and
information sharing in both hierarchical and general influencer-influencee
networks. Our particular interest is in
understanding how different reward structures in a network
influence this decision. We establish the existence of a unique pure-strategy Nash
equilibrium in regard to the choice made by each agent, and
study the effect of the quality and cost of communication, and the
reward sharing on the effort levels at this equilibrium. 
Our results show that a larger reward share from an influencee incentivizes the influencer to spend more effort, in equilibrium, on communication, capturing a free-riding behavior of well placed agents. We also address the reverse question of designing an optimal reward sharing scheme that achieves the effort profile which maximizes the system output. In this direction, for a number of stylized networks, we study the Price of Anarchy
for this output, and the interplay between information and incentive sharing on
mitigating the loss in output due to agent self-interest.
\fi
\end{abstract}

\noindent {\bf Keywords:} Organizational networks; Network graphs; Game theory; Production efforts; Management efforts; Nash equilibrium; Price of anarchy.

% \sncomment{Let us pose the question in the way that the goal is to maximize the output $\sum_i y_i$, and the equilibrium analysis comes as an intermediate step. The effort substitutability is the distinguishing factor from the Ballester paper, where the efforts are complementary. We need to emphasize this point.}

% \begin{keyword}
% %% keywords here, in the form: keyword \sep keyword
% Interdependent Value \sep Ex-post Incentive Compatibility
% 
% %% MSC codes here, in the form: \MSC code \sep code
% %% or \MSC[2008] code \sep code (2000 is the default)
% 
% \JEL D51 \sep D82
% 
% \end{keyword}

\section{Introduction}
\label{sec:intro}
The organization of economic activity as a means for the efficient
co-ordination of effort is a cornerstone of economic theory. 
\if 0 Networked
organizations have been the focus of much analysis, from different
perspectives, by computer scientists, economists, and social
scientists \cite{van1997state}.
\fi 
In networked organizations, agents are responsible for two processes: information flow and productive effort. A major objective of the organization is to maximize the net productive output of the networked system. However, in real organizations the individuals are responsible for multiple job roles and are rational and intelligent. They select their degree of effort which maximizes their payoff. Hence, to understand how organizations can boost their productive output, we need to understand how the individuals connected over a network split their efforts between these different roles. In particular, we study how agents in a specific model split their efforts between `effort to perform the task' \textit{versus}\ `investing effort in explaining tasks to others' depending on the amount of direct and indirect rewards. We model the agents having dual responsibilities of executing the task (production effort) and communicating the information (communication effort) to other agents. When an agent communicates with another, we call the former an {\em influencer} and the latter an {\em influencee}. Influencers improve the productivity of the influencees. Influencees, in turn, share a part of 
their rewards with the influencers, and this induces a {\em game} between the agents connected over the network.
% That is, we are interested in the trade-off an agent faces between the \textit{complementary} tasks of direct effort (or production) and what can be considered information propagation (or communication) effort, which can benefit others. 
In our model, {\em working on a task} brings a direct payoff but is more costly,
whereas {\em investing effort in explaining a task} can improve the
productivity of others (depending on the quality of communication in the network) and is less costly. The latter in turn generate additional indirect reward for an agent through reward sharing incentives.

% \dcp{explain we study a directed graph}
We model the network as a directed graph, where the direction represents the direction of information flow or communication between nodes and the rewards are shared in the reverse direction. Of particular interest to us are directed trees, which represent a hierarchy, and are most prevalent in the structure of organizations and firms. In the first part of the paper, our analysis is focused on hierarchies, and in the second, we generalize our results to arbitrary directed graphs. 
% Our goal is to understand how the reward sharing scheme affects the individual choices of efforts, and how to design these schemes to improve the total output of the network.

The focus of this paper is to maximize the productive output of the organization, and in the process, understand the strategic behavioral dynamics of the influencing phenomenon and find the equilibrium efforts chosen by the human participants in an organizational network.
\if 0
In particular, we are interested in the effects of the quality of {\em influence process}, communication and magnitude of reward sharing, on the equilibrium decisions of agents
with regard to how they split time between work and communication
efforts.
Here communication or information flow indicates any effect that changes the direct outcome to an agent for
the same action, while incentives are profit shares from other
agents. Agents can thus distribute their efforts between
\textit{production}, which creates direct payoffs and
\textit{communication} which increases the productivity of others and
could result in indirect payoffs through incentives from
them. D
Different networks, such as social and organizational networks,
have different purposes 
and thus different influence processes.
\fi
Within firms, organizational networks are often hierarchical and
there is a long history on the role of organizational structure
on economic efficiency going back to the work of~\citet{tichy1979social} on
social network analysis within organizations. More recently,  \citet{Radner92,ravasz2003hierarchical,Mookherjee2010} 
study the role of hierarchies; see~\citet{van1997state} for a survey of different perspectives. On the critical side, the work of \citet{cronin2015hierarchy} shows the adverse effects of hierarchy in human cooperation.

%
%Our economy is dominated by large firms, which face the problem of organizing the activities of their employees (or human resources) to maximize net output, forming a miniature economy of their own. Large firms usually have a hierarchical organization and the economic significance of this structure has been studied~\cite[e.g.]{Radner92,ravasz2003hierarchical,Mookherjee2010}. 
%
%
There is also a growing interest in crowdsourcing. Most relevant
here, is the ability to generate effective networks for solving challenging problems. Our model also captures some aspects of `diffusion-based task environments' where agents become aware of tasks through recruitment \citep{pickard-etal11MIT,watts2007viral}. For example, the winner of the 2009 DARPA Red Balloon Challenge adopted
an indirect reward scheme where the reward
associated with successful completion of subtasks was shared with
other agents in the network~\citep{pickard-etal11MIT}. At the same time modern massive online social networks and online gaming networks\footnote{http://www.eveonline.com/} require information and incentive propagation to organize activity. In this paper, we draw attention to the \textit{interaction} between various aspects of network influence, such as profit sharing \citep{gerhart1995employee}, information exchange \citep{bhatt2001knowledge}, and influence in networks. 

Motivated by the perspective that this phenomenon of
splitting effort into production and communication can be understood as a consequence of
the strategic behavior of the participants, we adopt a game theoretic model where individual
members in a networked organization decide on effort levels motivated
by their self interest. Agents are coordinated by incentives,
including both direct wages and indirect profit sharing. We construct
quantitative models of organizations, that are general enough to
capture social and economic networks, but specific enough for us to
obtain insightful results. We quantify the effects of reward sharing and
communication quality on the performance of 
work organizations in equilibrium. 
We then turn to the question of designing proper reward shares that can motivate people to maximize the social output of the system. We show that for stylized networks, under certain conditions, a proper incentive design can lead to the optimal social output. But when the condition is not satisfied, we capture the loss in optimality using the Price of Anarchy (PoA) framework. In particular, we provide the worst case bound on the sub-optimality.

\subsection{Overview of the Main Results}
\label{sec:overview-model}

% \sncomment{need to explain the basic findings clearly in this section -- for the common people}
For an easier exposition, in the first and major part of this work, we study hierarchies where the
network is a directed tree. Each agent decides how to split its effort between (i) production effort, which results in direct payoff for the agent and indirect reward to other agents on the path from the root to the
agent, and (ii) communication effort, which serves to improve the
productivity of his descendants on the tree (e.g., explaining the problem to others, conveying insights and the goals of the organization). A natural constraint is
imposed on the complementary tasks of production and communication, such that the more effort an agent
invests in production the less he can communicate. Investing production effort incurs a cost to an agent,
in return for some direct payoff.
%
%We model a hierarchical organizational network as a tree where a node (or agent) can exert two kinds of effort (i) production effort, which results in profits for the organization and direct payoff for the worker and (ii) communication (or management) effort, which increases the productivity of her descendants in the organization. The production and communication efforts of a node are complementary, i.e., if a node spends more effort in production, thus producing more herself, she has less effort to communicate, thereby reducing the productivity of the nodes in her subtree. The effort exerted by an agent comes at a cost to her, which captures aspects such as the opportunity cost of the time spent at work. 
%
%
But committing effort to communication
the can improve productivity of descendants, which in 
turn improves their output, should they decide to
invest effort in direct work, and thus give an agent a return on investment through an indirect payoff. 
%
%Nodes below an agent can (and need to) incentivize upper management to improve communication through reward sharing. The more time an agent's manager spends in communicating (e.g., explaining the problem to her, conveying insights and the goals of the organization) the more productive she becomes. The increased productivity of descendant, due to communication, is reflected by a larger reward shared with the manager. 
%

Each agent decides, \textit{based on his position in the hierarchy}, how
to split his effort between production and 
communication, in order to maximize the sum of direct
payoff and indirect reward, accounting for the cost of effort. 
%We use a game
%theoretic model where each agent is rational, has an explicit utility
%function, and always takes an action that maximizes her utility. We
%
For most of our results we adopt an {\em exponential productivity} (EP)
model, where the quality of communication falls exponentially with effort spent in production
with a parameter $\beta$. The model has the useful property that a
pure-strategy Nash equilibrium always exists  (\Cref{thm:exp-NE-necessary}) even though the game is non-concave. In a concave game, the agents' payoffs are concave in their choices (production efforts), and a pure-strategy Nash equilibrium is guaranteed to exist~\citep{rosen1965existence}. 
The equilibrium effort given by our result explains how a `better communication' and `increase in the cost of management' incentivizes an agent to devote {\em more effort in production} -- and also how it is beneficial to spend {\em more effort in management} when the `reward sharing' increases.
We develop tight conditions for the uniqueness of the equilibrium (\Cref{thm:exponential-sufficiency}).
%
%We also show that this sufficiency condition is tight in \Cref{thm:tightness}. 
%
In addition, for the EP model of communication, the 
Nash equilibrium can be computed in time that is quadratic in the number of agents, despite the non-concave nature of the problem, by exploiting the hierarchical structure.

We then ask the question what effect this equilibrium effort level has on the total output of the hierarchical organization. We define the {\em social output} to be the sum of the individual outputs which are products of {\em productivity} and {\em production effort}. 
Our next result is that for {\em balanced hierarchies} and in the EP
model, there exists a threshold $\beta^\ast$
on communication quality parameter $\beta$ such that if the parameter is below the threshold (communication is `good enough')
then the {\em equilibrium social output} can be made equal to the {\em optimal social output} by choosing \textit{the optimal reward sharing scheme}. The phenomenon is captured by the fraction called {\em price of anarchy} (PoA), which is the ratio of the optimal and the equilibrium social output. If the reward share is not chosen appropriately, PoA can be large (\Cref{thm:large-poa}). For
$\beta$ above this threshold (`low quality' communication), we give
closed-form bounds on the PoA (\Cref{thm:poa}), which we show are tight in special
networks, e.g., single-level hierarchies. 
% \dcp{say what these networks are} 
% Thus, even in simple hierarchies,
% if the communication is not good enough \textit{or}
% incentives are not chosen correctly, the PoA can be large.
%  \dcp{in what kinds of networks?}
This highlights the importance of the design of reward sharing in organizations accounting for both network structure and communication process in order to achieve a higher network output.

%emphasize that, our model leads to a non-concave, graphical game,
%which to the best of our knowledge is unique to our work. 
%
%, corbo2007importance,kearns2001graphical}.

%\subsection{Overview of the Main Results}
%While our methods allow characterization of the equilibrium efforts in general hierarchies, the symmetries inherent in balanced trees allow us We show how to characterize the

In the second part, we consider general directed network graphs
and establish the existence of a pure-strategy Nash equilibrium and a
characterization for when this equilibrium is unique
(Theorems~\ref{thm:general-sufficient-one-NE} and
\ref{thm:general-linear-g-sufficient-unique-NE}). We provide a
geometric interpretation of these conditions in terms of the stability
properties of a suitably defined Jacobian matrix
(Figure~\ref{fig:rou-NE}). This connection between control-theoretic stability and uniqueness of Nash equilibrium in network games is an interesting property of our model.

%From Section~\ref{sec:exponential-productivity} onwards, we proceed to make these statements more rigorous. 

For ease of reading, some proofs are deferred to the Appendix.

\section{A Hierarchical Model of Influencer and Influencee}
\label{sec:exponential-productivity}

\begin{wrapfigure}{r}{0.4\columnwidth}
% \begin{figure}[h!]
\centering
 \includegraphics[width=0.2\columnwidth]{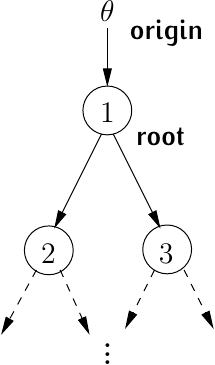}
\caption{A typical hierarchical network model.}
\label{fig:example}
% \end{figure}
\end{wrapfigure}

% {\bf xx consider changing influencees to dependents. I think influencers and 
% influencees read very similar xx}

In this section, we formalize a specific version of the hierarchical
network model. Let $N = \{1, 2, \dots, n\}$ denote a set of agents who
are connected over a hierarchy $T$.  Each node $i$ has a set of {\em
influencers}, whose communication efforts influence his own direct payoff, and a set of {\em influencees}, whose direct payoffs are
influenced by node $i$. In turn, the production efforts of these
influencees endow agent $i$ with indirect payoffs.
The origin (denoted by node $\theta$) is a node assumed to be outside the network, and communicates perfectly with the first (root) node, denoted by $1$.

%
% This ensures that the information flow to the root node (which corresponds to the CEO of an organization) is unattenuated.  
%
We number nodes sequentially, so that each child has a higher index than his parent, thus the adjacency matrix is an upper triangular matrix with zeros on the diagonal. Figure~\ref{fig:example} illustrates the model for an example
 hierarchical network.

The set of influencers of node $i$ consists of the nodes (excluding node $i$) on the unique path from the origin to the node, and is denoted by $P_{\theta \to i}$. The set of influencees of node $i$ consists of the nodes (again, excluding node $i$) in the subtree $T_i$ below her. 

The production effort, denoted by $x_i\in [0,1]$, of node $i$
yields a direct payoff to the node, and the particular way in which
this occurs depends on its {\em productivity}. The
remaining effort, $1-x_i$, goes to communication effort, and improves
the productivity of the influencees of the node. The constant sum of
production effort and communication effort models the constraint on an
agent's time, and therefore it is enough to write both the direct and indirect payoff of a node as a function of the production effort $x_i$. In particular, the productivity of a node, denoted by
$p_i(\mathbf{x}_{P_{\theta \to i}})$, depends on the communication
effort (and thus the production effort) of the influencers on path
$P_{\theta \to i}$ to the node. The production effort profile of these
influences is denoted by $\mathbf{x}_{P_{\theta \to i}}$.

It is useful to associate $x_i p_i(\mathbf{x}_{P_{\theta \to i}})$
with the value from the {\em direct output} of node $i$. 
The payoff to node $i$ comprises two additive terms that capture:
\smallskip

(1) the direct payoff, which depends on the value generated by 
the direct output of a node and the cost of production and communication
effort, and is modulated by the productivity of the node, and

(2) the indirect payoff, which is a fraction of the value associated
with the direct output of any influencee $j$ of the node. 
\smallskip

%
%The reward sharing fraction is given by $h_{ij}$, resulting in
%the following expression for the payoffs of a single agent.

Taken together, the payoff to a single node $i$ is:
\begin{equation}
 \label{eq:general-payoff-first}
  u_i(x_i,x_{-i}) = p_i(\mathbf{x}_{P_{\theta \to i}}) f(x_i) + \sum_{j \in T_i \setminus \{i\}} h_{ij}p_j(\mathbf{x}_{P_{\theta \to j}}) x_j.
\end{equation}
The first term is the product of the direct payoff and a function $f(x_i)$ (which models production output and cost) and captures the trade-off between direct output and
cost of production and communication effort. The second term is the total indirect
payoff received by node $i$ due to the output $p_j(\mathbf{x}_{P_{\theta \to j}}) x_j$ of its influencees. 
We insist that the productivity $p_j(\cdot)$ of any node $j$ is non-decreasing
in the communication effort of each influencer, and thus non-increasing
in the production effort of each influencer, and hence we require
$\frac{\partial }{\partial
 x_i}
 p_j(\mathbf{x}_{P_{\theta \to j}})\leqslant 0$ for all nodes $j$, 
where $i$ is an influencer of $j$.

Each node $i$ receives a share $h_{ij}$ of the value of the direct
 output of influencee $j$. The model can also capture a setting where an agent can only share output he creates, i.e. the total fraction of the output an agent retains and shares with the influencers is bounded at 1. 
Let us assume
that agent $j$ retains a share $s_{jj}$ and shares $s_{ij}$ with
influencers $i
\in P_{\theta \to j}$. 
A budget-balance constraint on the amount of direct value 
that can be shared requires $\sum_{i \in P_{\theta
\to j} \cup \{j\}} s_{ij} \leqslant 1$. Assume
that $s_{jj} = \gamma>0$, for all $j$, so that each node
retains the same fraction $\gamma$ of its direct output value.
Then, the
earlier inequality can be written as, $\sum_{i \in P_{\theta \to j}}
\frac{s_{ij}}{\gamma} \leqslant \frac{1}{\gamma} - 1$. 
Define $h_{ij}:=\frac{s_{ij}}{\gamma}$. In addition to notational cleanliness, this transformation
gives the advantage of not having any upper bound on the $\sum_{i \in
P_{\theta \to j}} h_{ij}$, since any finite sum can always be
accommodated with a proper choice of $\gamma$. Let us call the matrix $H = [h_{ij}]$ containing all the reward shares as the {\em reward sharing scheme}.

To highlight our results, we focus on a specific form of the payoff model, 
namely the {\em Exponential Productivity} (EP) model. A model is an instantiation of the direct-payoff
function $f(x_i)$ and the productivity function
$p_i(\cdot)$. In particular, in the EP model:\footnote{Similar conclusions can be drawn for a reasonable choice of a concave $f$ and non-decreasing $p_i$'s. However, we pick these reasonable forms for analytical convenience and to obtain closed form expressions that enable us make clear observations and conclusions.}
\begin{align}
f(x_i)&=x_i - \frac{x_i^2}{2} - b \frac{(1-x_i)^2}{2}, \label{eq:f-in-EP}
\\
p_i(\mathbf{x}_{P_{\theta \to i}}) &= \prod_{k \in P_{\theta \to i}} \mu(C_k) e^{-\beta x_k},
\label{eq:exponential-productivity}
\end{align}
where $b \geqslant 0$ is the cost of communication, $C_k$ is 
the number of children of node $k$, function $\mu(C_k) \in [0,1]$ is assumed
to be  non-increasing, and $\beta \geqslant 0$
denotes the noise in the communication, with higher $\beta$ corresponding
to a lower quality of communication.
We assume $p_1=1$ for the root node. This models the root having perfect productivity. 
We interpret the term $\mu(C_k) e^{-\beta x_k}$ as the communication
influence of node $k$ on the agents in his subtree, and this takes values in $[0,1]$.

The direct payoff of an agent $i$ is quadratic in production effort
$x_i$, and reflects a linear benefit $x_i$ from direct production effort
but a quadratic cost $x_i^2/2$ for effort. The utility model given by \Cref{eq:general-payoff-first} resembles the utility model given by  \citet{ballester06}. However, there are a few subtle differences in our model than that in this paper: (a) the utility of agent $i$ is not concave in her production effort $x_i$ (caused by the exponential term in the productivity); thus the existence of a pure Nash equilibrium is nontrivial (for concave games pure Nash equilibrium is guaranteed to exist~\citep{rosen1965existence}), (b) each agent has two types of effort, namely production and communication, and the communication effort of an agent is {\em complementary} to the production efforts of her influencees, while the production efforts are {\em substitutable} to each other. Also, the complementarity is nonlinear. In \Cref{sec:general-model}, we address quite general nonlinear complementarity. This is a step forward to the multidimensional effort distribution with nonlinear correlation between the efforts among agents. We 
chose this particular form to capture a realistic organizational hierarchy. (c) In addition, we also consider the cost due to communication, captured by $b (1-x_i)^2/2$.

% \sncomment{note this point of difference with Ballester, complementarity and substitutability}
%

The productivity of node $j$, given by
$p_j(\mathbf{x}_{P_{\theta \to j}})$, where $j \in T_i \setminus
\{i\}$ warrants careful observation. Here we explain the components of this function and the reasons for choosing them. Consider $\mu(C_k)$, which is
non-increasing in the number of children. The set $C_k$ captures the idea that
the effect of the communication effort is reduced if the node has more children to communicate with.
%
% handful of ideas at a time \cite{horvitz2003models}. 
%
An increase in production effort $x_k$ reduces the productivity of 
influencees of node $k$. In particular, the exponential term in the productivity captures two effects:
(a) a linear decrease in production effort gives exponential gain
in the productivity of influencee, which captures the importance of communication and management in organizations \citep{Allen2007Innovation}.
Smaller values of $\beta$ model better communication and a stronger positive effect on an influencee.
(b) We can approximate other models  by choosing $\beta$ appropriately. Linear productivity corresponds to small values of $\beta$. This property is useful when the effects of production and communication on the payoff are equally important. For large $\beta$ there is very small communication quality between agents and the value of communication effort is low.

The successive product of these exponential terms in the path from root to a node reflects the fact that a change in the production effort of an agent affects the productivity of the entire subtree below her. We note that the productivity of node $j$, where $j \in T_i \setminus
\{i\}$, is not a concave function of $x_i$, leading to the payoff function $u_i$ to be non-concave in $x_i$. Hence the existence of
a Nash equilibrium is not guaranteed a priori through known results on concave games~\citep{rosen1965existence}. 
%
% {\bf xx say more on this. why would it be otherwise
% guaranteed? cite a result? xx}
% This demonstrates that
%our framework is interesting and different from this line of work and
%work on graphical games \cite{kearns2001graphical}. 
In the next
section we will demonstrate the required conditions on existence and
uniqueness of a Nash equilibrium.
For brevity of notation, we will drop the arguments of
productivity $p_i$ at certain places where it is clear from the context. 

%We assume $p_1 = 1$ and the function $\mu$ in the expression of $p_i$ above to be a non-increasing, non-negative function of $C_k$, and lies within $[0,1]$. Here .

%While other non-exponential models are of course possible, we focus on this model for two reasons, beyond the algebraic benefits in analysis. The first is that the amount of time start-up team principals typically devote to the firm, are particularly large for successfully organizations \cite{Allen2007Innovation}, which is captured in our model as a linear decrease in production effort gives exponential gain in management productivity. The second is that, using the $\beta$ parameter we can also approximate other model including linear (small $\beta$) and no communication (large $\beta$) or disconnected models. The successive product of this term over the predecessors in the hierarchy reflects the fact that the information gets attenuated while traveling a longer distance over the network. We later show that a 

%We also remark on the choice of profit sharing function. 

Our results on existence, uniqueness and their interpretations generalize to other network structures beyond hierarchies, which we show in the later part of the paper. However, despite the
mathematical simplicity of the EP model, it allows for obtaining interesting results on the importance of influence, both communication and
incentives, and gives insight on outcome efforts in a networked organization.

\subsection{Main Results}
\label{sec:main-results}

The effect of communication efforts between nodes $i$ and $j$, where $i \in P_{\theta \to j}$ is captured by 
the fractional productivity $\frac{p_j}{p_i}$ defined as,
$p_{ij}(\mathbf{x}_{P_{i_- \to j}}) = \prod_{k \in P_{i_- \to j}}
\mu(C_k) e^{-\beta x_k}$, (the node $i_-$ is the parent of $i$ in the
hierarchy). This term is dependent only on the production efforts in the path segment between $i$ and $j$ and accounts for `local' effects. We show in the following theorem that the Nash equilibrium production effort of node $i$ depends on this local information from all its descendants.
\begin{theorem}[Existence of Pure Nash Equilibrium]
 \label{thm:exp-NE-necessary}
 A pure Nash equilibrium always exists in the effort game in the EP model, 
and is given by the production effort profile $(x^*_i, x^*_{-i})$ that satisfies,\footnote{Define $x^+ := \max \{0,x\}$.}
 \begin{align}
   \label{eq:exponential-NE}
    x_i^* = \left [ 1 - \frac{\beta}{1+b} \sum_{j \in T_i \setminus \{i\}} h_{ij} p_{ij}(\mathbf{x}^*_{P_{i_- \to j}}) x_j^* \right ]^+.
  \end{align}
\end{theorem}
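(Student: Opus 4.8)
The obstruction is that $u_i$ is not concave in $x_i$, so Rosen's theorem for concave games does not apply; the way around it is a structural observation about how $u_i$ depends on the other efforts. Fix agent $i$ and a profile $x_{-i}$, and view $u_i(\cdot,x_{-i})$ on $[0,1]$. For every influencee $j\in T_i\setminus\{i\}$ the path $P_{\theta\to j}$ runs through $i$, so $p_j = p_i\cdot(p_j/p_i)$ where the ratio $p_j/p_i = \mu(C_i)e^{-\beta x_i}\prod_k \mu(C_k)e^{-\beta x_k}$ (product over the nodes strictly between $i$ and $j$) carries exactly one factor $e^{-\beta x_i}$ coming from node $i$. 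Substituting into \eqref{eq:general-payoff-first} and factoring that term out gives
\[
 u_i(x_i,x_{-i}) \;=\; p_i\big(\mathbf x_{P_{\theta\to i}}\big)\,\Big[\,f(x_i) + e^{-\beta x_i}\,D_i\,\Big],
 \qquad
 D_i \;:=\; \sum_{j\in T_i\setminus\{i\}} h_{ij}\,\Big(e^{\beta x_i}\tfrac{p_j}{p_i}\Big)\,x_j \;\ge\; 0,
\]
where $D_i\ge 0$ depends only on the efforts of the strict descendants of $i$, and the positive prefactor $p_i$ does not depend on $x_i$. Hence the best responses of $i$ are exactly the maximizers of $\psi_i(x):=f(x)+e^{-\beta x}D_i$ over the compact interval $[0,1]$, a nonempty set by continuity, and this set depends only on $i$'s own effort variable and the efforts of $i$'s strict descendants.

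I would then characterize any maximizer $x_i^\ast$ of $\psi_i$. With $f'(x)=(1+b)(1-x)$ we have $\psi_i'(x)=(1+b)(1-x)-\beta D_i e^{-\beta x}$. If $x_i^\ast\in(0,1)$ then $\psi_i'(x_i^\ast)=0$, which rearranges to $x_i^\ast = 1-\tfrac{\beta D_i e^{-\beta x_i^\ast}}{1+b}$, a number in $(0,1)$ and hence equal to its own positive part. If $x_i^\ast=0$ then $\psi_i'(0)\le 0$, i.e.\ $\beta D_i\ge 1+b$, whence $\big[\,1-\tfrac{\beta D_i}{1+b}\,\big]^+=0=x_i^\ast$. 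If $x_i^\ast=1$ then $\psi_i'(1)\ge 0$, i.e.\ $-\beta D_i e^{-\beta}\ge 0$, forcing $\beta D_i=0$, and $\big[\,1-0\,\big]^+=1=x_i^\ast$. In every case $x_i^\ast = \big[\,1-\tfrac{\beta D_i e^{-\beta x_i^\ast}}{1+b}\,\big]^+$; recognizing $\tfrac{\beta D_i e^{-\beta x_i^\ast}}{1+b} = \tfrac{\beta}{1+b}\sum_{j\in T_i\setminus\{i\}} h_{ij}\,\tfrac{p_j}{p_i}\,x_j^\ast = \tfrac{\beta}{1+b}\sum_{j\in T_i\setminus\{i\}} h_{ij}\,p_{ij}(\mathbf x^\ast_{P_{i_-\to j}})\,x_j^\ast$ by the definition of the fractional productivity gives precisely \eqref{eq:exponential-NE}. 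This already shows that \emph{every} Nash equilibrium satisfies \eqref{eq:exponential-NE}; the three-case split is exactly where the non-concavity is absorbed, since $\psi_i$ may have an interior stationary point that is not its global maximizer.

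For existence I would exploit the triangular dependence isolated above. Process the nodes in order of decreasing index (so that every node is handled only after all of its descendants, which is possible because each child carries a larger index than its parent). At a leaf $\ell$, $T_\ell\setminus\{\ell\}=\emptyset$, so $D_\ell=0$ and the unique maximizer is $x_\ell^\ast=1$. Inductively, once every strict descendant of $i$ has been assigned its effort, $D_i$ is a fixed nonnegative real and we pick any $x_i^\ast\in\argmax_{x\in[0,1]}\psi_i(x)$. Since $u_i$ depends on no efforts other than $x_i$ and those of $i$'s strict descendants (and on the positive, $x_i$-independent scalar $p_i$, which only rescales and never changes the argmax), $x_i^\ast$ is a best response against the eventual complete profile $x_{-i}^\ast$, and this property persists as the construction proceeds toward the root. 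The profile $\mathbf x^\ast$ so obtained has every agent best-responding, hence is a Nash equilibrium, and by the previous paragraph it satisfies \eqref{eq:exponential-NE}.

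The step I expect to be the real work is the best-response characterization: because $u_i$ is genuinely non-concave, one cannot merely set a derivative to zero, and the argument must verify that corner maximizers---which can occur precisely because of the non-concavity---still obey the projected fixed-point identity \eqref{eq:exponential-NE}. Once that lemma is in hand, existence comes cheaply from the hierarchy's triangular structure rather than from any fixed-point theorem.
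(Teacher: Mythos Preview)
Your proof is correct and follows essentially the same route as the paper: backward induction from the leaves for existence (exploiting that the best-response set of $i$ depends only on the efforts of $i$'s strict descendants), and first-order conditions for the characterization \eqref{eq:exponential-NE}. The only cosmetic difference is that you treat the upper boundary $x_i^\ast=1$ by direct case analysis (showing it forces $\beta D_i=0$), whereas the paper observes that the unconstrained optimum over $x_i\ge 0$ is always at most $1$, drops that constraint, and then applies KKT with just the nonnegativity multiplier.
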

% 
% \begin{proof}
 The proof of this theorem uses the hierarchical structure of the network and the fact that the productivity functions ($p_i$'s) are bounded. We present the proof in Appendix~\ref{app:EP-proofs}.
%  \sncomment{result is true, but doesn't need Brouwer's, state it for any bounded $p_i$}
% \end{proof}

This theorem shows that the EP model allows us to guarantee the
existence of (at least one) Nash equilibrium. In particular, we can make certain observations on the equilibrium
production effort, some of which are intuitive.
\begin{itemize}
\item If communication
improves, i.e., $\beta$ becomes small, the production effort of each
node increases.
\item If the cost of management $b$ increases, the production
effort of each node increases. 
\item When reward
sharing ($h_{ij}$) is large, agents reduce production effort and
focus more on communication effort, which is more productive in terms of payoffs.
\item The computation of a Nash equilibrium at any node depends only on the production efforts of the nodes in its subtree. Thus, we can employ a backward induction algorithm which exploits this property that helps in an efficient computation of the equilibrium (this will be shown formally in the corollaries later in this section).
\end{itemize}
We now turn to establishing conditions for the uniqueness
of this Nash equilibrium.
Let us define the maximum amount of reward share that any node $i$ can accumulate from a hierarchy $T$ given a reward sharing scheme $H$ as,
$h_{\max}(T) = \sup_i \sum_{j \in T_i \setminus \{i\}} h_{ij}$. 
% {\bf xx why make this a function of $T$? it seems it always
% used as a fnc of $T$ throughout the paper xx}
%
We also define the {\em effort update function} as follows.
\begin{definition}[Effort Update Function (EUF)]
 \label{def:EUF}
 Let the function $F: [0,1]^n \to [0,1]^n$ be defined as,
  $$F_i(\mathbf{x}) = \left [ 1 - \frac{\beta}{1+b} \sum_{j \in T_i \setminus \{i\}} h_{ij} p_{ij}(\mathbf{x}_{P_{i_- \to j}}) x_j \right ]^+.$$
\end{definition}
Note that the RHS of the above expression contains the production efforts of all the agents in the subtree of agent $i$. This function is a prescription of the choice of the production effort of agent $i$, given a certain effort profile of the agents below $i$ in the hierarchy (\Cref{thm:exp-NE-necessary}). Hence the name `effort update'.

%The following theorem provides a sufficient condition for the
%existence of a unique Nash equilibrium.
%
\begin{theorem}[Sufficiency for Uniqueness]
 \label{thm:exponential-sufficiency}
  If $\beta < \sqrt{\frac{1+b}{h_{\max}(T)}}$, the Nash equilibrium effort profile $(x^*_i, x^*_{-i})$ is unique and is given by Equation (\ref{eq:exponential-NE}).  
\end{theorem}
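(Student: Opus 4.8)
The plan is to show that, under the hypothesis $\beta^2\,h_{\max}(T) < 1+b$, the fixed‑point system of \Cref{thm:exp-NE-necessary} has at most one solution in $[0,1]^n$ and that every Nash equilibrium is one of its solutions; together with the existence part of \Cref{thm:exp-NE-necessary} this gives uniqueness. The first step is to reduce a best response to the projected first‑order condition. Fixing $\mathbf{x}_{-i}$, the payoff of agent $i$ can be written as $u_i(x_i,\mathbf{x}_{-i}) = p_i\bigl(f(x_i) + c_i\,e^{-\beta x_i}\bigr)$, where $p_i>0$ is independent of $x_i$ and $c_i = c_i(\mathbf{x}_{-i}) = \mu(C_i)\sum_{j\in T_i\setminus\{i\}} h_{ij}\bigl(\prod_{i\prec k\prec j}\mu(C_k)e^{-\beta x_k}\bigr)x_j \ge 0$ is obtained by pulling the factor $\mu(C_i)e^{-\beta x_i}$ out of each $p_{ij}=p_j/p_i$. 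Since every $\mu(\cdot)$, every $e^{-\beta x_k}$ and every $x_j$ lies in $[0,1]$, this coefficient satisfies $c_i \le \sum_{j\in T_i\setminus\{i\}} h_{ij} \le h_{\max}(T)$ for all $\mathbf{x}_{-i}$, and it depends only on the efforts of the descendants of $i$. A direct computation gives $\partial_{x_i} u_i = p_i(1+b)\bigl(g_i(x_i)-x_i\bigr)$ with $g_i(t):=1-\tfrac{\beta}{1+b}c_i e^{-\beta t}$, and $\partial_{x_i}^2 u_i = p_i\bigl(\beta^2 c_i e^{-\beta x_i}-(1+b)\bigr)\le p_i\bigl(\beta^2 h_{\max}(T)-(1+b)\bigr)<0$ by the hypothesis; hence $u_i(\cdot,\mathbf{x}_{-i})$ is strictly concave on $[0,1]$, so the best response to $\mathbf{x}_{-i}$ is the unique $x_i\in[0,1]$ with $x_i=\bigl[g_i(x_i)\bigr]^+$, and the right‑hand side is exactly the coordinate map $F_i(\mathbf{x})$ of \Cref{def:EUF}. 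Consequently the Nash equilibria are precisely the fixed points of $F$, i.e.\ the solutions of \Cref{eq:exponential-NE}.

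Next I would show $F$ has a unique fixed point by backward induction along the hierarchy, exploiting the numbering in which every child has a larger index than its parent. The largest index $n$ is a leaf, so $T_n\setminus\{n\}=\emptyset$, $c_n=0$, and $x_n^\ast=1$ is forced. For the inductive step, suppose $x_j^\ast$ has been pinned down uniquely for all $j>i$, hence for all $j\in T_i\setminus\{i\}$. Since $F_i$ depends only on $x_i$ and on the now‑fixed efforts of the descendants of $i$, substituting those values turns $x_i^\ast=F_i(\mathbf{x}^\ast)$ into a one‑dimensional equation $t=\phi_i(t)$, where $\phi_i(t)=\bigl[\,1-\tfrac{\beta}{1+b}c_i e^{-\beta t}\,\bigr]^+$ and $c_i\in[0,h_{\max}(T)]$ is now a known constant. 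As $[\,\cdot\,]^+$ is $1$‑Lipschitz and $\bigl|\tfrac{d}{dt}\bigl(1-\tfrac{\beta}{1+b}c_i e^{-\beta t}\bigr)\bigr| = \tfrac{\beta^2 c_i}{1+b}e^{-\beta t} \le \tfrac{\beta^2 h_{\max}(T)}{1+b} < 1$ on $[0,1]$, the map $\phi_i$ is a contraction of the complete metric space $[0,1]$ into itself, so the Banach fixed‑point theorem yields a unique $x_i^\ast$. Iterating from $i=n$ down to $i=1$ produces a unique profile $\mathbf{x}^\ast$; by \Cref{thm:exp-NE-necessary} this profile is a Nash equilibrium, so it is the unique Nash equilibrium and it satisfies \Cref{eq:exponential-NE}.

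The step I expect to be the main obstacle is the reduction in the first paragraph — specifically, recognizing the right decomposition and controlling the constant. Although the equilibrium conditions couple all nodes of the tree, each coordinate map $F_i$ factors through the single nonlinearity $e^{-\beta x_i}$ times a coefficient $c_i$ that (i) depends only on the subtree of $i$ and (ii) is bounded by $h_{\max}(T)$ uniformly in the other agents' efforts. This is exactly what makes the global fixed‑point problem solvable by backward induction with a one‑dimensional contraction of modulus $\beta^2 h_{\max}(T)/(1+b)$ at each node, which is why the threshold takes the form $\beta<\sqrt{(1+b)/h_{\max}(T)}$. (One can alternatively attempt a direct contraction argument for $F$ on $[0,1]^n$, but the naive Lipschitz estimate there also picks up the depth of the hierarchy, so the backward‑induction route is cleaner.)
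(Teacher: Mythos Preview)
Your argument is correct and follows a genuinely different route from the paper. The paper proves that the full effort-update map $F:[0,1]^n\to[0,1]^n$ is a contraction by bounding the operator norm of its Jacobian: the hierarchical structure makes $\nabla g$ upper triangular, and the paper controls its norm via the diagonal entries $\frac{\beta^2}{1+b}\sum_{j\in T_k\setminus\{k\}}h_{kj}p_{kj}x_j\le \frac{\beta^2}{1+b}h_{\max}(T)<1$, invoking Banach on the whole cube at once. You instead exploit the tree structure node by node: first you observe that the hypothesis forces each $u_i(\cdot,\mathbf{x}_{-i})$ to be \emph{strictly concave} on $[0,1]$ (so best responses coincide exactly with the projected first-order conditions, i.e.\ with the coordinate maps $F_i$), and then you run backward induction from the leaves, solving at each node a one-dimensional fixed-point equation $t=\phi_i(t)$ whose Lipschitz constant is again bounded by $\beta^2 h_{\max}(T)/(1+b)$. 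The strict-concavity observation is a pleasant byproduct not made explicit in the paper---the game is non-concave in general, but becomes concave precisely in the uniqueness regime---and your route is more elementary, sidestepping any matrix-norm estimate; as you correctly anticipate in your closing remark, a careful operator-norm bound on the full Jacobian would have to account for the off-diagonal entries and hence the depth of the tree, so the node-by-node contraction is the cleaner path to the sharp threshold. The paper's global-contraction viewpoint, on the other hand, is what carries over to the non-hierarchical networks treated later, where backward induction is unavailable.
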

%
% \begin{proof}
 The proof of this theorem shows
that $F$ is a contraction, and is given in Appendix~\ref{app:EP-proofs}.
% \end{proof}
%
\begin{theorem}[Tightness]
\label{thm:tightness}
  The sufficient condition of Theorem~\ref{thm:exponential-sufficiency} is tight.
\end{theorem}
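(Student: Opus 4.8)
The plan is to read ``tight'' as: the threshold $\sqrt{(1+b)/h_{\max}(T)}$ cannot be replaced by anything strictly larger, and to prove this by exhibiting, for every $\delta>0$, a concrete hierarchy and reward-sharing scheme with (at least) two distinct Nash equilibria whose communication parameter exceeds $\sqrt{(1+b)/h_{\max}(T)}$ by a multiplicative factor at most $1+\delta$; letting $\delta\to 0$ then rules out any larger bound in Theorem~\ref{thm:exponential-sufficiency}. I would work with the minimal hierarchy $T$: a root $1$ with a single leaf $2$, and take $\mu\equiv 1$. The leaf's payoff is $u_2 = e^{-\beta x_1}f(x_2)$, a positive multiple of $f$, which is strictly increasing on $[0,1]$ since $f'(x)=(1+b)(1-x)$; hence $x_2=1$ is its unique best response, so every Nash equilibrium has $x_2^\ast=1$ and the game collapses to the one-dimensional problem of maximizing
\[
 u_1(x_1) \;=\; f(x_1) + h_{12}\,e^{-\beta x_1}, \qquad x_1\in[0,1],
\]
with $h_{\max}(T)=h_{12}$. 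Writing $\lambda := \beta^2 h_{12}/(1+b)$, the sufficiency threshold is exactly $\lambda<1$; and since $u_1''(x_1) = (1+b)\bigl(\lambda e^{-\beta x_1}-1\bigr)$ is strictly decreasing, $\lambda<1$ forces $u_1$ strictly concave on $[0,1]$ and the maximizer unique, which is why the boundary of interest is $\lambda=1$.

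Next I would fix any $\beta>1$ and let $\lambda$ range over $(\beta,\,e^{\beta-1})$, which is nonempty because $e^{\beta-1}>\beta$ for $\beta\neq 1$. In this range the unique inflection point $\xi := \beta^{-1}\ln\lambda$ of $u_1$ lies in $(0,1)$, with $u_1$ convex on $[0,\xi)$ and strictly concave on $(\xi,1]$. Tracking the sign of $u_1'$ -- using $u_1'(0)=(1+b)(1-\lambda/\beta)<0$, $u_1'(\xi)=\tfrac{1+b}{\beta}(\beta-1-\ln\lambda)>0$, and $u_1'(1)=-h_{12}\beta e^{-\beta}<0$ -- shows that on $[0,1]$ the function $u_1$ has exactly two local maxima, the corner $x_1=0$ and an interior point $x_1^\dagger\in(\xi,1)$, separated by a local minimum in $(0,\xi)$; in particular the global maximizer over $[0,1]$ is one of these two. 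Hence both $(0,1)$ and $(x_1^\dagger,1)$ are Nash equilibria as soon as $u_1(0)=u_1(x_1^\dagger)$. I would obtain this equality from an intermediate-value argument in $\lambda$: as $\lambda\downarrow\beta$ the point $x_1^\dagger$ is the unique interior maximizer inherited from the regime $\lambda<\beta$ (where $u_1$ is unimodal on $[0,1]$), so $u_1(x_1^\dagger)>u_1(0)$; whereas as $\lambda\uparrow e^{\beta-1}$ the point $x_1^\dagger$ and the intervening local minimum coalesce at $\xi$ and $u_1$ becomes monotone decreasing on $[0,1]$, forcing $u_1(0)>u_1(x_1^\dagger)$. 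Since $x_1^\dagger$, and therefore $u_1(x_1^\dagger)-u_1(0)$, varies continuously with $\lambda$ (by the implicit function theorem, as $u_1''(x_1^\dagger)\neq 0$), there is some $\lambda^\ast\in(\beta,e^{\beta-1})$ with $u_1(0)=u_1(x_1^\dagger)$, yielding two distinct equilibria.

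Finally I would read off the quantitative conclusion. For this instance $\beta = \sqrt{\lambda^\ast}\,\sqrt{(1+b)/h_{12}} = \sqrt{\lambda^\ast}\,\sqrt{(1+b)/h_{\max}(T)}$ with $1<\sqrt{\lambda^\ast}<e^{(\beta-1)/2}$, so the sufficient condition fails only by the factor $\sqrt{\lambda^\ast}$, and $\sqrt{\lambda^\ast}\to 1$ as $\beta\downarrow 1$; hence no bound strictly larger than $\sqrt{(1+b)/h_{\max}(T)}$ can be substituted in Theorem~\ref{thm:exponential-sufficiency}, i.e.\ the condition is tight. The genuinely delicate step is the intermediate-value argument producing $\lambda^\ast$, and in particular making the two limiting comparisons rigorous -- that the interior maximizer strictly beats the corner near $\lambda=\beta$ and is strictly beaten near $\lambda=e^{\beta-1}$; the reduction to a single edge and the shape analysis of $u_1$ are routine.
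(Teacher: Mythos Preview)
Your proof is correct and takes a genuinely different route from the paper. The paper gives a single explicit three-node example (root with two leaves, $b=0$, $h_{12}=h_{13}=0.25$, $\beta=2$) in which the root's fixed-point equation $1-x_1=e^{-2x_1}$ has the two solutions $x_1=0$ and $x_1\approx 0.797$ at (what it computes to be) the exact threshold. You differ in three ways. First, you work on the minimal two-node tree. Second, and more substantively, you do not stop at multiple roots of the first-order condition but arrange for two distinct \emph{global} maximizers of $u_1$ on $[0,1]$; this is strictly stronger, since a solution of the EUF need not be a best response --- indeed in the paper's own example $x_1=0$ has $u_1'(0)=0$ but $u_1''(0)>0$, so it is a local minimizer rather than a Nash equilibrium in the usual sense. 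Third, rather than a single example you build a one-parameter family, apply an intermediate-value argument in $\lambda$ over $(\beta,e^{\beta-1})$, and then let $\beta\downarrow 1$ to place instances with two equilibria arbitrarily close to the threshold $\lambda=1$. The paper's proof is shorter and fully explicit; yours costs more work but delivers genuine multiple Nash equilibria (not merely multiple EUF fixed points) at every tolerance, which is the more robust form of tightness.
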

\begin{wrapfigure}{R}{0.3\columnwidth}
% \begin{figure}[h!]
\centering
 \includegraphics[width=0.15\columnwidth]{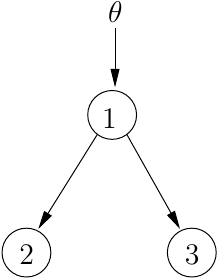}
\caption{Tightness of the sufficiency (Theorem~\ref{thm:exponential-sufficiency}).}
\label{fig:exnet}
% \end{figure}
\end{wrapfigure}
\begin{proof}
Consider a three-node hierarchy with nodes 2 and 3 being the children of node 1 (Figure~\ref{fig:exnet}). We show that if the sufficient condition 
is just violated, it results in multiple equilibria. Let $b=0$, and $h_{12} = h_{13} = 0.25$, therefore $h_{\max}(T) = 0.25$. Theorem~\ref{thm:exponential-sufficiency} requires that $\beta < 1/\sqrt{0.25} = 2$. We choose $\beta = 2$. The equilibrium efforts for node 2 and 3 are $1$. Node 1 solves the following equation to find the equilibria.
\[1-x_1 = e^{-2 x_1}.\]
This equation has multiple solutions, $x_1 = 0, 0.797$, showing non-uniqueness.
\end{proof}

The uniqueness condition indicates that the communication quality
needs to be `good enough' (small $\beta$) to ensure uniqueness of an
equilibrium. 
%
%To recollect, this kind of interaction between the
%hierarchical structure, the influence parametrization and effort
%outcomes is the focus of this paper, and this theorem allows us to
%make this connection more formal. 
%
It is worth noting that the uniqueness condition ensures the convergence of 
the best response dynamics,
in which all the players start from any arbitrary effort profile
$\mathbf{x}_{\text{init}}$, and sequentially update their efforts via
the function $F$, to the unique equilibrium. This is a
consequence of the fact that $F$ is a contraction.

% Note that, we provide a complete characterization of the NE existence beyond the sufficiency condition of \Cref{thm:exponential-sufficiency}. The uniqueness of NE result by \citet{ballester06} talks about only a sufficient condition and does not provide answers to the existence question when the condition is not satisfied.
% % but that is a concave game, the NE will always exist ... so nothing interesting

% \sncomment{complexity issues can be deleted}
We now turn to the computational complexity of
a Nash equilibrium. If there are multiple NE, these complexity results hold for computing {\em a} NE. Recall that the equilibrium computation of an agent requires only the production efforts and the reward structure of its subtree, and we can take advantage of the backward induction. This observation leads to the following corollary.
%, motivated by similar studies for
%concave graphical games
%\cite{kearns2001graphical,daskalakis2006computing}.
%
\begin{corollary}
 \label{cor:node-NE-complexity}
 The worst-case complexity of computing the equilibrium effort of node $i$ is $O(|T_i|^2)$. As a result, in this hierarchical model, the worst-case complexity of computing the equilibrium efforts of the whole network is $O(n^2)$.
\end{corollary}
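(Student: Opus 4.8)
The plan is to turn the structural observation recorded just before the corollary — that, by Equation~\eqref{eq:exponential-NE}, the equilibrium effort $x_i^*$ of node $i$ is determined by the efforts $\{x_j^* : j \in T_i \setminus \{i\}\}$ and the reward shares inside $T_i$ alone — into a backward-induction algorithm and then count its arithmetic operations. Since every child has a higher index than its parent, processing the nodes in the order $i = n, n-1, \dots, 1$ visits each node only after all of its descendants; a leaf $\ell$ has $T_\ell \setminus \{\ell\} = \emptyset$, so \eqref{eq:exponential-NE} yields $x_\ell^* = 1$ in $O(1)$, which is the base case.

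The heart of the argument is bounding the cost of one update, i.e.\ of producing $x_i^*$ once all $x_j^*$ with $j \in T_i \setminus \{i\}$ are already available. Writing $p_{ij}$ as $e^{-\beta x_i}$ times a factor $\hat p_{ij}$ that does not involve $x_i$ (hence is already known), \eqref{eq:exponential-NE} reduces to a single scalar equation in the lone unknown $x_i^*$, of the form $x_i^* = \big[\,1 - \tfrac{\beta}{1+b}\, e^{-\beta x_i^*}\, \hat S_i\,\big]^+$ with constant $\hat S_i = \sum_{j \in T_i \setminus \{i\}} h_{ij}\, \hat p_{ij}\, x_j^*$. I would assemble $\hat S_i$ by a single depth-first traversal of $T_i$ that carries a running product $\prod \mu(C_k) e^{-\beta x_k^*}$ down each path from $i$ and updates it in $O(1)$ per edge, for a total of $O(|T_i|)$; the resulting scalar equation is then solved in $O(1)$ (the relevant map is a piecewise-smooth self-map of $[0,1]$, so a solution — which exists by Theorem~\ref{thm:exp-NE-necessary} — is isolated by a constant number of bisection/Newton steps at constant precision). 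Hence a single node update costs $O(|T_i|)$.

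For the node-$i$ bound, computing $x_i^*$ from scratch forces the computation of $x_j^*$ for every $j \in T_i$, each costing $O(|T_j|)$ by the previous step, so the total is $\sum_{j \in T_i} O(|T_j|)$, which is $O(|T_i|^2)$ since each $|T_j| \le |T_i|$ and there are $|T_i|$ summands; a path rooted at $i$ makes $\sum_{j \in T_i} |T_j| = \Theta(|T_i|^2)$, so the bound is worst-case tight. Specializing to the root, whose subtree is all of $N$ — or, equivalently, running the single backward sweep once over the whole tree — gives the whole-network bound $\sum_{i \in N} O(|T_i|) = O(n^2)$.

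The step that needs the most care is the per-update bound: evaluating the $|T_i| - 1$ path-products $p_{ij}$ independently would cost $O(|T_i|)$ apiece, making the processing of a single node $O(|T_i|^2)$ rather than $O(|T_i|)$ and inflating the from-scratch node bound to $O(|T_i|^3)$, so it is precisely the incremental-product traversal that makes the stated bounds go through; one should also be explicit that the transcendental equation for $x_i^*$ is solved in constant time rather than treated as free.
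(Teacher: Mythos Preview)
Your proposal is correct and follows the same backward-induction strategy as the paper: leaves first, then each node using only its subtree, with the worst case being a line, giving $\sum_{j \in T_i}(|T_j|-1) = |T_i|(|T_i|-1)/2 = O(|T_i|^2)$, and specializing to the root for the $O(n^2)$ whole-network bound. You are in fact more careful than the paper, which simply asserts the per-level count without discussing how the path-products $p_{ij}$ are assembled or how the scalar fixed-point equation in $x_i^*$ is solved; your incremental-product traversal and $O(1)$ root-finding remarks fill exactly those gaps.
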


Using the structure of the Nash equilibrium obtained in this section, we now address the question of the total productive output generated in equilibrium.

\section{Maximizing the Productive Output of the Network}
\label{sec:poa-genl}
% \sncomment{nontrivial amount of work in this section}
In our model, the equilibrium behavior of the agents are tightly coupled with the network structure and the reward sharing scheme as seen from \Cref{eq:exponential-NE}. In this section, we look at how the equilibrium behavior affects the {\em social output} of the hierarchy $T$ for a given effort vector $\mathbf{x} \in [0,1]^n$, defined as follows.
\begin{align}
 SO(\mathbf{x}, T) &= \sum_{i \in N} p_i(\mathbf{x}_{P_{\theta \to i}}) x_i
\end{align}
This quantity captures the sum of the output of each individual agents in the network, where the output of each agent is the product of their productivity and production effort.
For a given hierarchy $T$, define the optimal effort vector as $\mathbf{x}^{\text{OPT}} \in \argmax_{\mathbf{x}} SO(\mathbf{x}, T)$. 
This is the production effort profile across the network that maximizes
the total direct output value, considering also the effect of communication
effort (induced by lower production effort) on the productivity of other nodes. Ideally a planner (the management of an organization) would like to achieve this maximal social output for the given hierarchy. However, the strategic choice of the individuals might not always lead to this global performance.
The question we address in this section is how the Nash equilibrium effort level
$\mathbf{x}^*$ performs in comparison to the socially optimal outcome
$\mathbf{x}^{\text{OPT}}$.

Note that computing $\mathbf{x}^{\text{OPT}}$ is easy for the EP model. Finding the maxima of $SO(\mathbf{x}, T)$ can be done via backward induction on the levels in the hierarchy and solving nonlinear equations of single variable at each stage.

We will consider cases where the equilibrium is unique, hence, the {\em price of anarchy} \citep{koutsoupias1999worst} is given by:
\begin{align}
\text{PoA} &= \frac{SO(\mathbf{x}^{\text{OPT}}, T)}{SO(\mathbf{x}^*, T)}.
\end{align}

This quantity measures the degree of efficiency of the network. Making PoA equal to unity would be the ideal achievement for the designer. However, that may not always be possible given the parameters of the model. In such a case, we provide a design procedure of the reward sharing scheme that yields the maximum social output.

We note that the equilibrium effort profile $\mathbf{x}^*$ depends on the reward sharing scheme $H$, while $\mathbf{x}^{\text{OPT}}$ does not. The goal of this section is to understand how one can engineer the $H$ to reduce the PoA (thereby making the social output closer to the optimal). The following example
shows that if the reward sharing is not properly designed, the PoA can
be arbitrarily large. 
We first consider a single-level hierarchy (see
Figure~\ref{fig:star}). To simplify the analysis, we also
assume that the function $\mu(C_1) = 1$, irrespective of the number of
children of node 1. By symmetry, we consider a single value $h$, such that
$h_{12}= h_{13}= \ldots= h_{1n}= h$. We refer to this model as {\tt FLAT}.
We will return to
this model later as well, after presenting our results for more
general balanced hierarchies.
We first consider what happens when there is bad communication
($\beta$ large) and no profit sharing ($h=0$), between node 1 and its children.

\begin{wrapfigure}{R}{0.3\columnwidth}
% \begin{figure}[h!]
\centering
 \includegraphics[width=0.2\columnwidth]{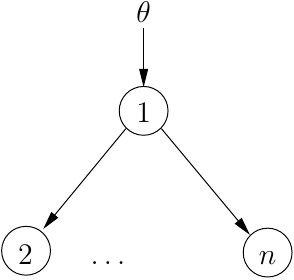}
\caption{{\tt FLAT} hierarchy.}
\label{fig:star}
% \end{figure}
\end{wrapfigure}

% \begin{theorem}[Large PoA] \label{thm:large-poa}
% For $n \geqslant 3$, the PoA is $\frac{n-1}{2}$ 
% in the {\tt FLAT} hierarchy when $\beta = \ln (n-1)$
% and $h=0$.
% \end{theorem}

\begin{example}[Large PoA] \label{thm:large-poa}
For $n \geqslant 3$, the PoA is $\frac{n-1}{2}$ 
in the {\tt FLAT} hierarchy when $\beta = \ln (n-1)$
and $h=0$.
For {\tt FLAT}, the social output is given by, $SO(\mathbf{x}, \text{\tt FLAT}) = \sum_{i = 2}^{n} e^{-\beta x_1} x_i + x_1$. We see that $\beta = \ln (n-1) \geqslant -\ln \left( 1 - \frac{1}{n-1}\right)$, for all $n \geqslant 3$. The optimal effort profile $\mathbf{x}^{\text{OPT}} = (0, 1, \dots, 1)$ maximizes the social output (stated in \Cref{cor:opt-effort}, for a proof see \Cref{lem:optimal-effort} in \Cref{app:B}). Hence the optimal social output is $n-1$.
However, for reward sharing factor $h=0$, we get the equilibrium effort profile from \Cref{eq:exponential-NE} to be $\mathbf{x}^* = (1, 1, \dots, 1)$. This yields a social output of $(n-1) e^{-\beta} + 1$. Hence the PoA is $\frac{n-1}{(n-1) e^{-\ln (n-1)} + 1} = \frac{n-1}{2}$.
\end{example}

%
% \begin{proof}
%  For {\tt FLAT}, the social output is given by, $SO(\mathbf{x}, \text{\tt FLAT}) = \sum_{i = 2}^{n} e^{-\beta x_1} x_i + x_1$. We see that $\beta = \ln (n-1) \geqslant -\ln \left( 1 - \frac{1}{n-1}\right)$, for all $n \geqslant 3$. The optimal effort profile $\mathbf{x}^{\text{OPT}} = (0, 1, \dots, 1)$ maximizes the social output (stated in Corollary~\ref{cor:opt-effort}, for the proof see Lemma~\ref{lem:optimal-effort} in Appendix~\ref{app:B}). Hence the optimal social output is $n-1$.
% However, for reward sharing factor $h=0$, we get the equilibrium effort profile from Equation (\ref{eq:exponential-NE}) to be $\mathbf{x}^* = (1, 1, \dots, 1)$. This yields a social output of $(n-1) e^{-\beta} + 1$. Hence the PoA is $\frac{n-1}{(n-1) e^{-\ln (n-1)} + 1} = \frac{n-1}{2}$.
% \end{proof}

% {\bf xx does not make sense to me. why is social optimal effort having
% 0 effort by node 1? it has productivity 1 and should work!  Also, why
% is the social output if all put in productivity effort 1 equal to 1
% and not n? xx}

However, if $h$ is chosen appropriately, e.g., if it were chosen to be large positive, the equilibrium effort profile would have been closer to that of the optimal -- leading to PoA being closer to $1$.

This raises a natural question: \textit{is it always possible to
design a suitable reward sharing scheme that can make PoA $=1$ for any
given hierarchy?} In order to answer that, we define the {\em
stability} of an effort profile $\mathbf{x}$.
\begin{definition}[Stable Effort Vector]
 An effort profile $\mathbf{x} = (x_1, \dots, x_n)$ is {\em stable}, represented by $\mathbf{x} \in S$, if $\mathbf{x} \geqslant \mathbf{0}$, and there exists a reward sharing matrix $H = [h_{ij}], \ h_{ij} \geqslant 0$, such that,
\begin{equation}
  \label{eq:admissible}
  \begin{aligned}
  \sum_{j \in T_i \setminus \{i\}} a_{ij}(\mathbf{x}) h_{ij} \geqslant 1 - x_i; \quad \sum_{j \in T_i \setminus \{i\}} h_{ij} \leqslant \frac{1+b}{\beta^2}, \ \forall i \in N.
 \end{aligned}
\end{equation}
Where, $a_{ij}(\mathbf{x}) = \frac{\beta}{1+b} p_{ij}(\mathbf{x}_{P_{i_- \to j}}) x_j$, for all $j \in T_i \setminus \{i\}$, and zero otherwise. If such a solution $H^*$ exists, we call it a \emph{stable reward sharing matrix}.
\end{definition}

The inequalities capture a required balance between incentives and
information flow. In the first inequality, for a fixed communication factor $\beta$ and cost coefficient $b$, the term $a_{ij}(\cdot)$ is proportional to the
fractional output (fractional productivity $\times$ production effort) of an agent $j$.
After multiplying with $h_{ij}$, this is the effective indirect
output that $i$ receives from $j$. The RHS of the inequality can be
interpreted as the communication effort of agent $i$. Hence, this
inequality says that the total indirect benefit should be at least
equal to the effort put in by a node for communicating the information
to its subtree. 
If we consider that the agents share information based on the reward share they receive, the flow of information and reward forms a closed loop.
The second inequality says that the closed loop `gain' of the
information flow ($\beta^2$) and the reward share accumulated by agent $i$ ($\sum_{j \in
T_i \setminus \{i\}} h_{ij}$) should be bounded by the cost of sharing the information. The closed loop `gain' is essentially the reward that an agent accumulates due to his communication effort \textit{through} his descendants. We can connect a stable effort vector with the Nash equilibrium of the effort game.

\begin{lemma}[Stability-Nash Relationship]
  \label{lem:stability-nash}
 If an effort profile $\mathbf{x} = (x_1, \dots, x_n)$ is stable, it is the unique Nash equilibrium of the effort game with the corresponding stable reward sharing matrix.
\end{lemma}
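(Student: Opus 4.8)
The plan is to read the definition of stability through the fixed-point characterisation of Nash equilibria supplied by Theorem~\ref{thm:exp-NE-necessary}. Writing $a_{ij}(\mathbf{x})=\frac{\beta}{1+b}\,p_{ij}(\mathbf{x}_{P_{i_-\to j}})\,x_j$ (exactly the quantity appearing in the stability inequalities), a profile $\mathbf{x}$ is a Nash equilibrium of the effort game with reward matrix $H$ precisely when $x_i = [\,1-\sum_{j\in T_i\setminus\{i\}} a_{ij}(\mathbf{x})\,h_{ij}\,]^{+}$ for every $i$, i.e.\ when $\mathbf{x}=F(\mathbf{x})$ for the effort update function $F$ of Definition~\ref{def:EUF}. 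So it suffices to exhibit, for a stable $\mathbf{x}$, a stable reward sharing matrix under which this identity holds and under which the uniqueness hypothesis $\beta<\sqrt{(1+b)/h_{\max}(T)}$ of Theorem~\ref{thm:exponential-sufficiency} is met.

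The first step is to turn the inequality in the definition of stability into an equality. Stability only guarantees $\sum_j a_{ij}(\mathbf{x})\,h^{*}_{ij}\ge 1-x_i$ for \emph{some} $H^{*}=[h^{*}_{ij}]\ge 0$, whereas the fixed-point identity needs $\sum_j a_{ij}(\mathbf{x})\,h_{ij}=1-x_i$ whenever $x_i>0$, so I would pass to a tight version by row-scaling: put $\tilde h_{ij}=\lambda_i h^{*}_{ij}$ with $\lambda_i=\frac{1-x_i}{\sum_j a_{ij}(\mathbf{x})h^{*}_{ij}}$ when that denominator is positive, and $\lambda_i=0$ otherwise (in which case the first stability inequality forces $x_i=1$, so equality $\sum_j a_{ij}(\mathbf{x})\tilde h_{ij}=0=1-x_i$ holds trivially). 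Since $\mathbf{x}\in[0,1]^{n}$ and the denominator is at least $1-x_i\ge 0$ by the first stability inequality, each $\lambda_i\in[0,1]$; hence $\tilde H\ge 0$, the first inequality now holds with equality, and the second is preserved because $\sum_j \tilde h_{ij}=\lambda_i\sum_j h^{*}_{ij}\le \sum_j h^{*}_{ij}\le (1+b)/\beta^{2}$. (The equality also re-derives $x_i = 1-\sum_j a_{ij}(\mathbf{x})\tilde h_{ij}\le 1$, confirming $\mathbf{x}$ is a legitimate effort profile.) Thus $\tilde H$ is again a stable reward sharing matrix for $\mathbf{x}$.

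The second step is the verification. Substituting $\tilde H$ into the right-hand side of~\eqref{eq:exponential-NE} and using the definition of $a_{ij}$, for each $i$ we get $[\,1-\sum_j a_{ij}(\mathbf{x})\tilde h_{ij}\,]^{+}=[\,1-(1-x_i)\,]^{+}=[x_i]^{+}=x_i$, the last equality since $x_i\ge 0$; hence $\mathbf{x}=F(\mathbf{x})$, and by Theorem~\ref{thm:exp-NE-necessary} $\mathbf{x}$ is a Nash equilibrium of the game with reward matrix $\tilde H$. For uniqueness, the second stability inequality gives $h_{\max}(T)=\sup_i\sum_j \tilde h_{ij}\le (1+b)/\beta^{2}$, i.e.\ $\beta\le\sqrt{(1+b)/h_{\max}(T)}$, which is exactly the hypothesis of Theorem~\ref{thm:exponential-sufficiency}; invoking that theorem gives that $\mathbf{x}$ is the \emph{unique} Nash equilibrium. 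The lone boundary case $\beta^{2}h_{\max}(T)=1+b$ can be absorbed by reading the second constraint in the stability definition with strict inequality (which is anyway what makes $F$ a genuine contraction) or by a short limiting argument; and when $\mathbf{x}=\mathbf{1}$ one has $\tilde H=0$ and uniqueness is immediate.

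The step I expect to be the main obstacle is the first one: reconciling the ``$\ge$'' in the definition of stability with the equality a Nash fixed point demands, while ensuring that the repair — scaling the rows of $H^{*}$ by factors $\lambda_i\in[0,1]$ — neither breaks the bound $\sum_j h_{ij}\le (1+b)/\beta^{2}$ nor pushes $\mathbf{x}$ outside $[0,1]^{n}$ (both safe because row-scaling by a factor in $[0,1]$ is monotone) nor spoils the uniqueness threshold (the boundary subtlety above). Everything else is routine, since the cancellation of productivities $p_j/p_i=p_{ij}$ is already built into the definition of $a_{ij}$.
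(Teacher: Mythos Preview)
Your argument follows the same route as the paper: verify that $\mathbf{x}$ satisfies the fixed-point identity of Theorem~\ref{thm:exp-NE-necessary}, and then read off uniqueness from Theorem~\ref{thm:exponential-sufficiency} via the second stability inequality. The paper's own proof is in fact terser than yours---it simply asserts that the first stability inequality together with $x_i\ge 0$ already yields $x_i=[1-\sum_j a_{ij}(\mathbf{x})h_{ij}]^{+}$---so your row-scaling construction of $\tilde H$ (to force that inequality to an equality) is a point where you are \emph{more} careful than the original, and the boundary case $\beta^{2}h_{\max}(T)=1+b$ that you flag is likewise glossed over there.
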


\begin{proof}
 Let $\mathbf{x}$ is a stable effort profile. So, there exists a stable reward sharing matrix corresponding to it. Let $H = [h_{ij}], \ h_{ij} \geqslant 0$ be the matrix, s.t.\ the inequalities corresponding to \Cref{eq:admissible} are satisfied with $\mathbf{x}$. Since $\mathbf{x} \geqslant \mathbf{0}$, reorganizing the first inequality of \Cref{eq:admissible} and noting the fact that $x_i \geqslant 0, \ \forall i \in N$, we get,
  \[x_i = \left[ 1 - \sum_{j \in T_i \setminus \{i\}} a_{ij}(\mathbf{x}) h_{ij} \right]^+, \ \forall i \in N.\]
 Under the condition given by the second inequality of \Cref{eq:admissible}, the Nash equilibrium is unique and is given by the above expression (recall \Cref{thm:exponential-sufficiency}). Hence, $\mathbf{x}$ is the unique Nash equilibrium of this game.
\end{proof}

Now it is straightforward to see that the stability of $\mathbf{x}^{\text{OPT}}$ is sufficient for PoA to be $1$. This is because now the $H$ that makes the $\mathbf{x}^{\text{OPT}}$ vector stable can be used as the reward sharing scheme, and for that $H$ the equilibrium effort profile will coincide with $\mathbf{x}^{\text{OPT}}$. In other words, the optimal effort vector can be supported in equilibrium by a suitable reward sharing scheme. Hence, the following lemma is immediate.

\begin{lemma}[No Anarchy]
\label{lem:noAnarchy}
A stable reward sharing scheme corresponding to $\mathbf{x}^{\text{OPT}}$ yields a PoA of 1.
\end{lemma}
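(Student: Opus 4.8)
The plan is to chain the two lemmas that immediately precede the statement, so there is essentially nothing analytic left to do. First I would unpack the hypothesis: to speak of \emph{a} stable reward sharing scheme corresponding to $\mathbf{x}^{\text{OPT}}$ is exactly to assume $\mathbf{x}^{\text{OPT}} \in S$, i.e.\ that there exists a matrix $H^\ast = [h^\ast_{ij}]$ with $h^\ast_{ij} \geq 0$ satisfying both inequalities of \eqref{eq:admissible} instantiated at $\mathbf{x} = \mathbf{x}^{\text{OPT}}$. (If no such scheme exists the claim is vacuous, so I proceed under the assumption that one does, which is precisely the setting in which the lemma is invoked later.)

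Second, I would apply Lemma~\ref{lem:stability-nash} to this particular $H^\ast$. Since $\mathbf{x}^{\text{OPT}}$ is stable with witness $H^\ast$, the lemma gives that $\mathbf{x}^{\text{OPT}}$ is the unique Nash equilibrium of the effort game whose reward sharing scheme is $H^\ast$; in particular the second inequality of \eqref{eq:admissible} forces $\beta < \sqrt{(1+b)/h_{\max}(T)}$, so Theorem~\ref{thm:exponential-sufficiency} applies and the equilibrium is genuinely unique, which is what makes the PoA well defined for the instance $(T, H^\ast)$. Consequently, for the scheme $H^\ast$ the equilibrium effort profile satisfies $\mathbf{x}^\ast = \mathbf{x}^{\text{OPT}}$.

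Third, I would evaluate the price of anarchy for this instance. Using that the social output $SO(\cdot, T)$ does not depend on the reward sharing scheme (as noted just before the statement), and substituting $\mathbf{x}^\ast = \mathbf{x}^{\text{OPT}}$,
\[
\text{PoA} \;=\; \frac{SO(\mathbf{x}^{\text{OPT}}, T)}{SO(\mathbf{x}^\ast, T)} \;=\; \frac{SO(\mathbf{x}^{\text{OPT}}, T)}{SO(\mathbf{x}^{\text{OPT}}, T)} \;=\; 1 .
\]
For completeness I would remark that $\text{PoA} \geq 1$ always, since $\mathbf{x}^{\text{OPT}}$ maximizes $SO(\cdot, T)$; hence $1$ is the smallest value the PoA can take and the scheme $H^\ast$ attains it.

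There is no real obstacle here; the substantive work was already carried out in Lemma~\ref{lem:stability-nash}. The only points that need care are bookkeeping ones: (i) reading the statement as conditional on $\mathbf{x}^{\text{OPT}}$ being stable; (ii) being explicit that the $\mathbf{x}^\ast$ appearing in the PoA must be the equilibrium for the \emph{specific} constructed scheme $H^\ast$, not for an arbitrary $H$; and (iii) noting that $\mathbf{x}^{\text{OPT}}$ need not be a unique maximizer of $SO(\cdot,T)$, but any maximizer yields the same value $SO(\mathbf{x}^{\text{OPT}},T)$, so the choice among them is immaterial. Once these are pinned down the conclusion is immediate.
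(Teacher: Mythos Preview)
Your proposal is correct and mirrors the paper's own argument, which is simply the paragraph immediately preceding the lemma: invoke Lemma~\ref{lem:stability-nash} to identify $\mathbf{x}^{\text{OPT}}$ with the unique equilibrium under $H^\ast$, then compute the ratio. Your additional bookkeeping remarks (conditional reading, specificity of $H^\ast$, possible non-uniqueness of the optimizer) are sound and more careful than the paper, though note that the second inequality of \eqref{eq:admissible} technically yields only $\beta \leq \sqrt{(1+b)/h_{\max}(T)}$ rather than strict inequality.
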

% % this proof is so trivial that doesn't need a mention.
% \begin{proof}
%   From Lemma~\ref{lem:stability-nash}, we know if $\mathbf{x}^{\text{OPT}}$ is stable, it is the unique Nash equilibrium of the effort game. Hence PoA $= 1$.
% \end{proof}
% 
% {\bf xx you might break this down into two lemmas if you think it is helpful to
% first prove that stable implies NE xx}
% 
% {\bf xx worried about how $h_{ij}$ can be unbounded positive. how does this make
% sense? it seems we're getting PoA = 1 by allowing for reward to be generated
% from thin air. xx}

A couple of important questions are then: {\em how efficiently can we check if a given effort profile $\mathbf{x}$ is stable or not? And how to choose a reward sharing scheme that makes the effort profile stable?}
The answer is that we can solve the following feasibility linear program (LP) for a given effort profile:

\begin{equation}
 \label{eq:feasLP}   
 \begin{array}{cc}
  \min & 1 \\
  \mbox{ s.t. } & \left . \begin{array}{rcl}
  \sum_{j \in T_i \setminus \{i\}} a_{ij}(\mathbf{x}) h_{ij} & \geqslant & 1 - x_i, \\
  \sum_{j \in T_i \setminus \{i\}} h_{ij} & \leqslant & \frac{1+b}{\beta^2}, \\
      h_{ij} & \geqslant & 0, \forall j,
                  \end{array}
  \right \} \ \forall i \in N.
 \end{array}
\end{equation}
If a solution exists to the above LP, we conclude that $\mathbf{x}$ is stable and declare the corresponding $H$ to be the resulting reward sharing scheme.
Linear programs can be efficiently solved and therefore checking an effort profile for stability can be done efficiently.

%The case of multiple $\mathbf{x}^{\text{OPT}}$ is a trivial extension
%of this exercise. For each $\mathbf{x}^{\text{OPT}}$, we can check
%this condition and conclude about PoA $=1$. This lemma says that
%checking for achievability of no anarchy can be done efficiently.

\paragraph{A Note on the Reward Share Design}
This condition gives us a recipe of the design of the reward sharing scheme. However, the next question is: {\em what happens when the $\mathbf{x}^{\text{OPT}}$ is unstable?}
If the above feasibility LP does not return any solution matrix $H$, we conclude that $\mathbf{x}^{\text{OPT}} \notin S$, where $S$ is the set of all stable effort vectors. In such a scenario, we cannot guarantee PoA to be unity. However, for any given reward sharing matrix $H$, there is an equilibrium effort profile $\mathbf{x}^*(H)$. We can, therefore, solve for $H_{\max} \in \argmax_{H : \mathbf{x}^*(H) \in S} SO(\mathbf{x}^*(H))$ which leads to an equilibrium effort profile $\mathbf{x}^*(H_{\max})$ that lies in the stable set and maximize the social output. Therefore, when we cannot find a reward sharing scheme to achieve the optimal social output, $H_{\max}$ is our best bet. Computing $H_{\max}$ for general hierarchies may be a hard problem, and we leave that as an interesting future work. However, for certain special classes of hierarchies, it is possible to derive bounds on the PoA (thereby providing a design recipe for $H$ to achieve a lower bound on the social output). In the following section, we do the same for the balanced hierarchies. The price of anarchy analysis, therefore, serves as a means to find the optimal reward sharing scheme that gives a theoretical guarantee on the social output of the system.
% 
% {\bf xx above not that interesting. it doesn't tell me anything about
% when PoA = 1 can be achieved that I can understand xx}

\if 0
\subsection{Effect of communication on the PoA in general hierarchies}

% \begin{wrapfigure}{r}{0.5\columnwidth}
\begin{figure}[h!]
\centering
\caption{PoA as a function of communication factor $\beta$.}
\label{fig:poa-sim}
\end{figure}
% \end{wrapfigure}

In the previous section, we have seen that for a given communication
factor $\beta$, one can determine if there exists a reward sharing
scheme $H$ for a hierarchy that makes the PoA $=1$.  
We are also
interested in understanding how the PoA depends upon the communication
factor $\beta$, when such an $H$ does not exist, that is, if
feasibility LP (Equation (\ref{eq:feasLP})) does not return a feasible
$H$ and $\mathbf{x}^{\text{OPT}} \notin S$. In such a scenario, we
cannot guarantee PoA to be unity. 

For any given reward sharing matrix $H$, there is an equilibrium
effort profile $\mathbf{x}^*(H)$. We can thus solve for $H_{\max} \in
\argmax_{H : \mathbf{x}^*(H) \in S} SO(\mathbf{x}^*(H))$, which leads
to an equilibrium effort profile $\mathbf{x}^*(H_{\max})$ that is stable and maximize the social output. When we cannot find a
reward sharing scheme to set PoA $=1$, $H_{\max}$ is the choice of
reward sharing that minimizes the PoA. 

Figure~\ref{fig:poa-sim} shows a simulation where for each $\beta$ we
generated a large number of random $7$ node hierarchical networks. For
each choice, we found the optimal reward sharing matrix
$H_{\max}$. The plot shows the mean PoA with standard error around
it. We see from the simulation that as $\beta$ increases, it limits
the reward share among the agents (second inequality of Equation
(\ref{eq:admissible})). This shrinks the set of stable effort profiles
$S$, and gives rise to an increase in the PoA. This again highlights
the importance of efficient communication in organizational
hierarchies. In the next section we make this intuition more formal,
by deriving bounds on the PoA for such general (possibly) unstable
hierarchies.
\fi

\subsection{Price of Anarchy in Balanced Hierarchies}
\label{sec:poa}

In this section we consider a simple yet representative class of
hierarchies, namely the balanced hierarchies, and analyze the effect
of communication on PoA and provide efficient bounds. Hierarchies in
organizations are often (nearly) balanced, and the {\tt FLAT} or
linear networks are special cases of the balanced hierarchy (depth = 1
or degree = 1). Hence, the class of balanced hierarchies can generate
useful insights. In addition, the symmetry in balanced hierarchies
allows us to obtain interpretable closed-form bounds and understand
the relative importance of different parameters.

We consider a balanced $d$-ary tree of depth $D$. By symmetry, the
efforts of the nodes that are at the same level of the hierarchy are
same at both equilibrium and optimality. This happens because of the fact that in the EP model, both the equilibrium and optimal effort profile computation follows a backward induction method starting from the leaves towards the root. Since the nodes in the same level of the hierarchy is symmetric in the backward induction steps, they have identical effort profiles. %A more rigorous proof of this fact is omitted due to space limitations.

With a little abuse of notation, we denote the efforts of each node at
level $i$ by $x_i$. We start numbering the levels from root, hence,
there are $D+1$ levels. Note that there are a few interesting special
cases of this model, namely (a) $d = 2$: balanced binary tree, (b) $D
= 1$: flat hierarchy, (c) $d = 1$: line. We assume, for notational simplicity only, that the function $\mu(C_k) = 1$, for all $C_k$, though our results generalize. This function is the coefficient of the productivity function. $\mu(C_k) = 1$ also models organizations where each manager is assigned a small team and there is no attenuation in productivity due to the number of children. In order to present the {\em price of anarchy} (PoA) results, we
define the set $\xi$:
\begin{equation}
 \xi(\beta) = \left\{ x : x = \left[ 1 - \frac{1}{\beta} e^{-\beta x}\right]^+ \right\}. \label{eq:xi-defn}
\end{equation}

This set is the set of possible equilibrium effort levels for agents
at the penultimate level of the EP model hierarchy when $\beta >
1$. Note that this set is a singleton, when $\beta > 1$. Depending on
$\beta$, we define a lower bound $\phi(d,\beta)$ on the contribution
of an agent toward the social output,
%  (\ref{eq:soc-output}) % why do we refer equations so far apart?
and a
sequence of nested functions $t_i$, where $d$ is the degree of each node.

 \begin{equation} \label{eq:t_D}
 \begin{aligned}
  \phi(d,\beta) &= \max \left \{ \frac{1}{\beta} (1 + \ln (d \beta)), d \beta +  (1-d \beta) \xi(\beta) \right \}, \\
  t_1(d,\beta) &= \phi(d,\beta), t_2(d,\beta) = \phi(d \cdot \phi(d,\beta),\beta), \dots, t_D(d,\beta) &= \phi(d \cdot t_{D-1}(d,\beta),\beta).
 \end{aligned}
 \end{equation}

\begin{theorem}[Price of Anarchy]
 \label{thm:poa}
For a balanced $d$-ary hierarchy with depth $D$, as $\beta$ increases, we can show the following price of anarchy results.
  \begin{equation}
    \begin{aligned}
      \text{ When } 0 \leqslant \beta \leqslant 1, & \qquad \text{ PoA } = 1, \\
      \text{ and when } 1 < \beta < \infty, & \qquad \text{ PoA } \leqslant \frac{d^D}{t_D(d,\beta)}.
    \end{aligned}
  \end{equation}
\end{theorem}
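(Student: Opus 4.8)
The plan is to split on the value of $\beta$, and in both regimes to compare the equilibrium social output under a carefully chosen reward matrix $H$ with $SO(\mathbf{x}^{\text{OPT}},T)$. A common preliminary, which I would take from Corollary~\ref{cor:opt-effort}/Lemma~\ref{lem:optimal-effort}, is the structure of $\mathbf{x}^{\text{OPT}}$: holding all other coordinates fixed, $SO(\cdot,T)$ has the form $p_i\,x_i + B\,e^{-\beta x_i} + C$ with $B\geq 0$ (the $x_i$-dependence of the descendant term $\sum_{j\in T_i\setminus\{i\}}p_j x_j$ factors through a single $e^{-\beta x_i}$), hence is convex in $x_i$; so every maximizer is bang-bang, $x_i^{\text{OPT}}\in\{0,1\}$. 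Together with level-symmetry and a monotonicity argument on the normalized subtree optima $M_\delta$ ($M_0=1$, $M_\delta=\max\{dM_{\delta-1},\,1+de^{-\beta}M_{\delta-1}\}$) this yields a threshold: for some level $L$, $\mathbf{x}^{\text{OPT}}$ plays $0$ at levels $0,\dots,L-1$ and $1$ at levels $L,\dots,D$, and in particular $SO(\mathbf{x}^{\text{OPT}},T)=M_D$.

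For $0\le\beta\le 1$ I would show $\mathbf{x}^{\text{OPT}}\in S$ and invoke Lemma~\ref{lem:noAnarchy}. The first inequality in~\eqref{eq:admissible} is vacuous at any node with $x_i^{\text{OPT}}=1$. For a node $i$ above level $L$, pick any descendant $j$ at level exactly $L$: every node on the path from $i$ to $j$ (including $i$) plays $0$, so $p_{ij}(\mathbf{x}^{\text{OPT}})=1$, and $x_j^{\text{OPT}}=1$, whence $a_{ij}(\mathbf{x}^{\text{OPT}})=\tfrac{\beta}{1+b}$. Putting all of $i$'s share on that single $j$, i.e. $h_{ij}=\tfrac{1+b}{\beta}$, gives $\sum_j a_{ij}h_{ij}=1\ge 1-x_i^{\text{OPT}}$, and the budget constraint becomes $\tfrac{1+b}{\beta}\le\tfrac{1+b}{\beta^2}$, i.e. exactly $\beta\le 1$. (For $\beta=0$ productivities are identically $1$, $\mathbf{x}^{\text{OPT}}=\mathbf{1}$, and the $H\equiv 0$ equilibrium already equals it.) By Lemma~\ref{lem:stability-nash} this $H$ makes $\mathbf{x}^{\text{OPT}}$ the unique equilibrium, so $\text{PoA}=1$.

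For $\beta>1$ I would bound numerator and denominator of PoA separately. Numerator: from $M_0=1$ and the recursion for $M_\delta$, a one-line induction gives $M_\delta\le d^\delta$, the step using $1+d^\delta e^{-\beta}\le d^\delta\iff d^\delta(1-e^{-\beta})\ge 1$, valid for $\beta>1,\ d\ge 2$; hence $SO(\mathbf{x}^{\text{OPT}},T)\le d^D$. Denominator: I would build an admissible $H$ level-by-level from the leaves upward and track $\nu_i$, the equilibrium output of a depth-$i$ subtree normalized by its root's productivity, so $\nu_0=1$ and $\nu_i=x^*+d\,\nu_{i-1}e^{-\beta x^*}=g_{d\nu_{i-1}}(x^*)$ with $g_e(x):=x+e\,e^{-\beta x}$, where $x^*$ is the equilibrium effort induced at that level. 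By~\eqref{eq:exponential-NE}, putting all of a node's budget $\tfrac{1+b}{\beta^2}$ onto its children forces $x^*=[1-\tfrac{1}{\beta}e^{-\beta x^*}x^*_{\text{child}}]^+$, which at the penultimate level (children are leaves, at effort $1$) is exactly $\xi(\beta)$; putting zero reward gives $x^*=1$. Now $g_e$ is convex: its unconstrained minimum is $\tfrac{1}{\beta}(1+\ln(e\beta))$ — the first term of $\phi(e,\beta)$ — while $g_e(\xi(\beta))=e\beta+(1-e\beta)\xi(\beta)$ is its second term; choosing at each level the better of effort $1$ or effort $\xi(\beta)$ therefore yields $\nu_i\ge\max\{g_{d\nu_{i-1}}(1),g_{d\nu_{i-1}}(\xi(\beta))\}\ge\phi(d\nu_{i-1},\beta)$. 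Since $\phi(\cdot,\beta)$ is nondecreasing in its first argument and $\nu_{i-1}\ge t_{i-1}$ by induction, $\nu_i\ge\phi(dt_{i-1},\beta)=t_i$; unrolling to the root gives $SO(\mathbf{x}^*,T)=\nu_D\ge t_D(d,\beta)$. Dividing the two bounds gives $\text{PoA}\le d^D/t_D(d,\beta)$.

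The main obstacle is the denominator bound: one must verify that the prescribed per-level efforts are \emph{jointly} realizable by a single admissible $H$ once the descendants' (already fixed) efforts are taken into account. Concretely, inducing effort $\xi(\beta)$ at a level by routing reward to a child requires that child to be at effort $1$ (routing through deeper descendants costs more than the budget $\tfrac{1+b}{\beta^2}$), so the construction is a level-coupled choice rather than an independent one, and one must argue this coupling never forces $\nu_i$ below what the $t_i$ recursion demands — equivalently, that whenever the $\xi(\beta)$-branch of $\phi$ is the binding one, the lower levels can be arranged so as to permit it. The numerator step $SO(\mathbf{x}^{\text{OPT}},T)\le d^D$ is routine given the bang-bang characterization, modulo degenerate cases such as $d=1$ which are checked directly. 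A minor technical point is that attaining the full budget $\tfrac{1+b}{\beta^2}$ sits on the uniqueness boundary of Theorem~\ref{thm:exponential-sufficiency}, so the bound is really an infimum over admissible $H$, attained in the limit.
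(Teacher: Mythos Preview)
Your overall plan matches the paper's: split on $\beta\le 1$ versus $\beta>1$, and for $\beta>1$ bound numerator and denominator of the PoA separately. For Case~1, you go through the stability framework (Lemma~\ref{lem:stability-nash} and Lemma~\ref{lem:noAnarchy}) and route each under-$L$ node's entire reward to a single level-$L$ descendant; the paper instead does a direct backward induction on the equilibrium equation~\eqref{eq:exponential-NE}, choosing $h_{D,D+1}$ so that the level-$D$ equation hits $x_D^{\text{OPT}}$, then setting $h_{D-1,D}=0$ and adjusting $h_{D-1,D+1}$, and so on up the tree. These are equivalent repackagings of the same idea; your version is arguably cleaner. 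Your numerator bound $M_D\le d^D$ is also fine --- the paper gets equality directly from Lemma~\ref{lem:optimal-effort} (since $\beta>1\ge -\ln(1-1/d)$ for $d\ge 2$), but either suffices.

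The real gap is exactly the one you flag in your obstacle paragraph, and your proposed resolution (``choose at each level the better of effort $1$ or $\xi(\beta)$'') is not the paper's and does not quite close it. Since the unconstrained minimum of $g_e$ is always at most $g_e(\xi)$, in fact $\phi(e,\beta)=g_e(\xi)$, so the $\xi$-branch is \emph{always} the binding one --- you cannot hope to alternate. The paper's fix is different: it takes a \emph{single} admissible $H$ that, at every internal level, puts the full budget $\tfrac{1+b}{\beta^2}$ on the immediate children (never on deeper descendants and never zero). This yields the coupled recursion $x_{D-k}=1-\tfrac{x_{D-k+1}}{\beta}e^{-\beta x_{D-k}}$, so every internal effort satisfies $x_{D-k}\ge \xi(\beta)$ and, crucially, the equilibrium relation lets one substitute $e^{-\beta x_{D-k}}=\tfrac{\beta(1-x_{D-k})}{x_{D-k+1}}$. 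The paper then uses $x_{D-k+1}\le 1$ and $x_{D-k}\ge\xi$ to recover the second term of $\phi(d\,t_{k-1},\beta)$ at each step, which is what drives the $t_k$ recursion. So the missing ingredient in your denominator bound is not a combinatorial argument about when the $\xi$-branch can be ``arranged'', but rather this algebraic trick of replacing the exponential by a linear expression via the equilibrium identity for the fixed all-on-children scheme. Your observation that the full budget sits on the boundary of Theorem~\ref{thm:exponential-sufficiency} is valid; the paper uses the boundary value without comment.
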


\begin{proof}
 The proof is constructive and sets the $H$ matrix appropriately to achieve the bounds on PoA. The $H$ matrix constructed this way acts as the reward sharing scheme to achieve a reasonable enough social output. For details, see Appendix~\ref{app:B}.
\end{proof}

\begin{wrapfigure}{R}{0.5\textwidth}
% \begin{figure}
 \centering
 \includegraphics[width=0.4\columnwidth]{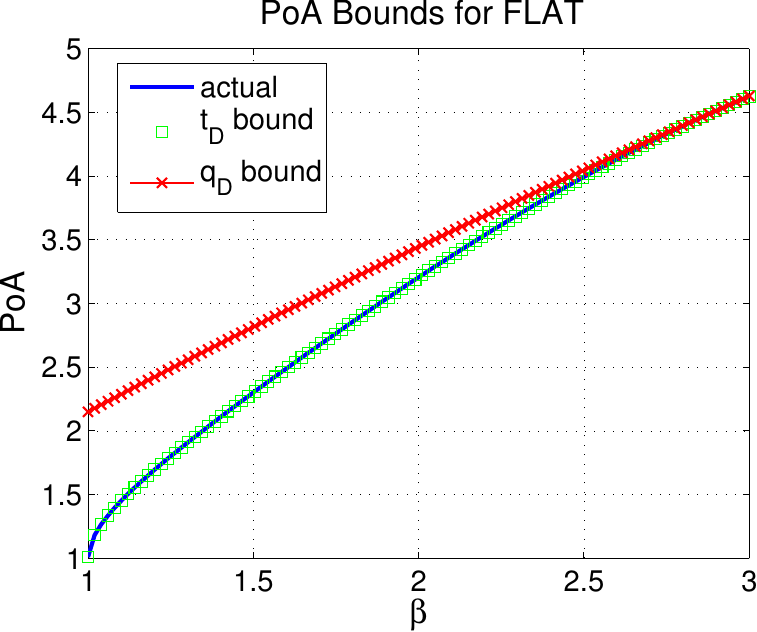}
\caption{Bounds on PoA for {\tt FLAT}, $d=6,D=1$.}
\label{fig:poaStar}
% \end{figure}
\end{wrapfigure}
As opposed to our choice of lower bound $\phi$, a na\"ive lower bound
of $\frac{1}{\beta} (1 + \ln (d \beta))$ can also be used. 
%we denote the corresponding sequence of nested functions similar to the ones defined in Equation (\ref{eq:t_D}) in that case to be $q_i, i = 1, \dots, D$.
However, this gives a weaker bound on the PoA. As an example, we demonstrate the weakness of the bound for {\tt FLAT} (recall Figure~\ref{fig:star}) in
Figure~\ref{fig:poaStar} (the {\tt FLAT} hierarchy is a balanced
tree with $D=1, d=n-1$) -- where $q_D$ is same as $t_D$ when $\phi(d,\beta)$ is defined to be $\frac{1}{\beta} (1 + \ln (d \beta))$. Figure~\ref{fig:poaStar} shows that the bound
given by our analysis is tight for {\tt FLAT},
indicating the value of the analysis and also gives intuition to the shape of the effect of $\beta$ on the PoA.

We can then have the following corollaries of Theorem~\ref{thm:poa},
\begin{corollary}[Optimal Effort]
\label{cor:opt-effort}
 For the {\tt FLAT} hierarchy, if $0 \leqslant \beta < -\ln \left( 1 - \frac{1}{n} \right)$, the optimal effort profile is where all nodes put unit effort. When $-\ln \left( 1 - \frac{1}{n} \right) \leqslant \beta < \infty$, the optimal changes to the profile where the root node puts zero effort and each other node puts unit effort.
\end{corollary}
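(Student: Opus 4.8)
The plan is to maximize the social output of {\tt FLAT} directly, exploiting its almost trivial structure, rather than invoking the general PoA machinery of Theorem~\ref{thm:poa}. For the {\tt FLAT} hierarchy the root has productivity $1$ and every child $i \ge 2$ has productivity $\mu(C_1)e^{-\beta x_1} = e^{-\beta x_1}$, so
\begin{equation}
 SO(\mathbf{x}, {\tt FLAT}) = x_1 + e^{-\beta x_1}\sum_{i=2}^{n} x_i .
\end{equation}
First I would observe that the coefficient $e^{-\beta x_1}$ multiplying $\sum_{i=2}^{n} x_i$ is strictly positive for every value of $x_1$, so $SO$ is strictly increasing in each child effort $x_i$, $i \ge 2$; hence at any maximizer we must have $x_i^{\text{OPT}} = 1$ for all $i \ge 2$. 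This reduces the problem to the one-dimensional maximization of $g(x_1) := x_1 + (n-1)e^{-\beta x_1}$ over $x_1 \in [0,1]$.

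The key step is the observation that $g$ is convex on $[0,1]$, since $g''(x_1) = (n-1)\beta^2 e^{-\beta x_1} \ge 0$. A convex function on a compact interval attains its maximum at an endpoint, so $x_1^{\text{OPT}} \in \{0,1\}$; that is, the optimal profile is either $(1,1,\dots,1)$ or $(0,1,\dots,1)$, and no interior effort for the root is ever optimal. It then remains to decide which endpoint wins by comparing $g(0) = n-1$ with $g(1) = 1 + (n-1)e^{-\beta}$: the all-unit profile is optimal precisely when $g(1) \ge g(0)$, i.e.\ when $(n-1)(1 - e^{-\beta}) \le 1$, which rearranges to the threshold on $\beta$ stated in the corollary; for $\beta$ above the threshold the root switching to $x_1 = 0$ becomes strictly better.

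I do not expect any real obstacle here: the whole argument is an elementary convexity-plus-endpoint comparison. The only points needing a line of care are ruling out interior maximizers of $g$ (handled by $g'' \ge 0$) and the boundary case where $\beta$ equals the threshold, at which both profiles are optimal — the statement resolves the tie toward the root-zero profile, consistent with its use in the proof of Theorem~\ref{thm:large-poa}. Alternatively, the same conclusion can be recovered by specializing the optimal-effort computation inside the proof of Theorem~\ref{thm:poa} to $D = 1$ and $d = n-1$, but the direct argument above is shorter and self-contained.
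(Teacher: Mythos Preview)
Your approach is essentially the paper's own: the paper defers the proof to Lemma~\ref{lem:optimal-effort} (specialized to $D=1$, $d=n-1$), whose argument is exactly the one-variable analysis you carry out---fix the leaves at $1$ and compare the two endpoints of $x_1\mapsto x_1+(n-1)e^{-\beta x_1}$. Your explicit use of convexity to rule out interior maxima is a small clean-up over the lemma's inequality chain, and you also cover the small-$\beta$ regime directly, which the lemma does not state.

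One point of care: your inequality $(n-1)(1-e^{-\beta})\le 1$ rearranges to $\beta\le -\ln\!\bigl(1-\tfrac{1}{n-1}\bigr)$, not to $-\ln\!\bigl(1-\tfrac{1}{n}\bigr)$ as printed in the corollary. The $n-1$ version is the one actually used in the proof of Theorem~\ref{thm:large-poa} and is what Lemma~\ref{lem:optimal-effort} gives with $d=n-1$, so the discrepancy is a typo in the corollary rather than an error in your argument; just do not claim your inequality ``rearranges to the threshold stated'' without flagging this.
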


\begin{corollary}
 For the {\tt FLAT} hierarchy, when $0 \leqslant \beta \leqslant 1$, PoA $= 1$, and when $1 < \beta < \infty$, PoA $\leqslant \frac{n}{\phi(d,\beta)}$.
\end{corollary}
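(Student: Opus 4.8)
The plan is to read the statement off from Theorem~\ref{thm:poa}, using the fact that the {\tt FLAT} hierarchy is precisely the balanced $d$-ary tree with branching factor $d=n-1$ and depth $D=1$, so that by~(\ref{eq:t_D}) we have $t_D(d,\beta)=t_1(n-1,\beta)=\phi(n-1,\beta)=\phi(d,\beta)$ and $d^D=n-1$. For $0\le\beta\le1$, Theorem~\ref{thm:poa} gives $\text{PoA}=1$ outright. A self-contained check of this regime goes as follows: the socially optimal profile has every leaf at effort $1$ (their productivity coefficient $e^{-\beta x_1}$ is strictly positive), and the root at effort $0$ or $1$, since $x_1\mapsto x_1+(n-1)e^{-\beta x_1}$ is convex so its maximum on $[0,1]$ is attained at an endpoint; one then verifies with~(\ref{eq:admissible}) that both candidate profiles are stable when $\beta\le1$ — for $(1,\dots,1)$ take $H=\mathbf{0}$, and for $(0,1,\dots,1)$ take $(n-1)h=\tfrac{1+b}{\beta}$, which is admissible precisely because $\tfrac{1+b}{\beta}\le\tfrac{1+b}{\beta^2}$ for $\beta\le1$ — so Lemma~\ref{lem:noAnarchy} yields $\text{PoA}=1$.

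For $1<\beta<\infty$, Theorem~\ref{thm:poa} with $D=1$ gives $\text{PoA}\le\dfrac{d^D}{t_D(d,\beta)}=\dfrac{n-1}{\phi(n-1,\beta)}\le\dfrac{n}{\phi(d,\beta)}$, which is the claimed bound. To obtain the same estimate directly for {\tt FLAT}: by Corollary~\ref{cor:opt-effort} (note $-\ln(1-1/n)<1<\beta$) the optimum is $(0,1,\dots,1)$, so $SO(\mathbf{x}^{\text{OPT}})=n-1\le n$; now let the designer pick the reward share saturating stability, $(n-1)h=\tfrac{1+b}{\beta^2}$, so that by~(\ref{eq:exponential-NE}) the leaves play $1$ and the root plays the unique element $\bar\xi$ of the set $\xi(\beta)$ in~(\ref{eq:xi-defn}) (a singleton since $\beta>1$), giving $SO(\mathbf{x}^*)\ge\bar\xi+(n-1)e^{-\beta\bar\xi}$. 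Since $\bar\xi>0$ we may use $\bar\xi=1-\tfrac1\beta e^{-\beta\bar\xi}$, i.e.\ $e^{-\beta\bar\xi}=\beta(1-\bar\xi)$, to rewrite the right-hand side as $(n-1)\beta+(1-(n-1)\beta)\bar\xi$, which is exactly the second term in the $\max$ defining $\phi(n-1,\beta)$; and by convexity of $x_1\mapsto x_1+(n-1)e^{-\beta x_1}$ this value also dominates that function's global minimum $\tfrac1\beta(1+\ln((n-1)\beta))$, the first term of $\phi$. Hence $SO(\mathbf{x}^*)\ge\phi(n-1,\beta)$ and $\text{PoA}=SO(\mathbf{x}^{\text{OPT}})/SO(\mathbf{x}^*)\le n/\phi(d,\beta)$.

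The one delicate point, in the self-contained route, is that $(n-1)h=\tfrac{1+b}{\beta^2}$ (resp.\ $\tfrac{1+b}{\beta}$ in the $\beta\le1$ case) sits exactly on the boundary of the strict uniqueness condition $\beta<\sqrt{(1+b)/h_{\max}(T)}$ of Theorem~\ref{thm:exponential-sufficiency}. I would handle this either by arguing directly that the equilibrium is still unique on this boundary (the update map $F$ remains non-expansive and~(\ref{eq:exponential-NE}) pins the root's value), or by taking $(n-1)h$ slightly below the threshold, so that $x_1^*\downarrow\bar\xi$, and passing to the limit in the PoA bound. Everything else reduces to the already-established results — Theorem~\ref{thm:poa}, Lemma~\ref{lem:noAnarchy}, Corollary~\ref{cor:opt-effort} — plus the elementary convexity computation above, so the boundary bookkeeping is really the only obstacle.
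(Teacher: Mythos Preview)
Your proposal is correct and matches the paper's approach: the corollary is stated immediately after Theorem~\ref{thm:poa} with no separate proof, the implicit argument being exactly your specialization $D=1$, $d=n-1$, $t_1=\phi$, followed by the trivial relaxation $n-1\le n$. Your self-contained route is just the proof of Theorem~\ref{thm:poa} restricted to depth one, and your caution about sitting on the boundary $h_{\max}(T)=(1+b)/\beta^2$ of the uniqueness condition is in fact more careful than the paper, which uses that boundary value without comment.
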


The second corollary above makes rigorous the intuition that when
$\beta$ is small enough the optimal $\mathbf{x}$ can be achieved in the equilibrium of a strategic play of the agents by
choosing a small enough reward share $h$. However, when $\beta$ grows,
in order to ensure uniqueness of the Nash equilibrium, the choice of
$h$ becomes limited (as it has to satisfy $\leqslant (1+b) / \beta^2$)
resulting in a PoA, as captured in Figure~\ref{fig:poaStar}.

\Cref{thm:poa} also gives a theoretical justification of the usefulness of good communication on productive output in balanced hierarchies. When communication is good, i.e., $\beta$ is small, it is possible to design reward sharing schemes to achieve optimal effort profile in equilibrium -- which ceases to be the case when communication worsens ($\beta$ becomes large).

\section{A General Network Model of Influencer and Influencee}
\label{sec:general-model}

In this section, we show that the results on existence and uniqueness
of a pure strategy Nash equilibrium generalize to a much broader
setting of agents as influencer and influencees interacting over an arbitrary network.

Suppose that the agents are connected over a (possibly non-hierarchical)
network $G$. Each node $i$ has a set of influencers, denoted by $R_i$ (generalizing $P_{\theta \to i}$),
and a set of influencees, $E_i$ (generalizing $T_i \setminus \{i\}$). We import the notation from Section~\ref{sec:exponential-productivity} with their exact or analogous meanings for productivity $p_i(\mathbf{x}_{R_i})$ and reward sharing scheme $H$. 
% The effort put in by agent $k$,
% denoted by $x_k \in [0,1]$, affects either the productivity of node
% $i$ (if $k \in R_i$), or yields passive rewards to node $i$ (if $k \in
% E_i$). Note that, $i \notin R_i, i \notin E_i$. The payoff of agent
% $i$ comprises of two additive terms that capture two effects: (1) the
% product of productivity, denoted by $p_i$, and the active payoff of
% agent $i$, denoted by the function $f$. The set of influencers $R_i$
% is responsible for the productivity of agent $i$. Therefore, $p_i$ is
% a function of the effort vector of the influencers,
% $\mathbf{x}_{R_i}$. The function $f$ is a concave function of $i$'s
% production effort $x_i$. (2) The influencee $j$ shares her reward with
% agent $i$. The fraction of reward agent $j$ shares with $i$ is given
% by $h_{ij}$. Let us denote all of them in the matrix $H =
% [h_{ij}]$. Productivity is non-increasing in the production efforts of
% the influencers. So, $\frac{\partial p_j}{\partial x_i}$ is
% non-positive for all $j \in E_i$.  
Now, the payoff
function of agent $i$ is given by,

\begin{equation}
 \label{eq:general-payoff}
  u_i(x_i,x_{-i}) = p_i(\mathbf{x}_{R_i}) f(x_i) + \sum_{j \in E_i} h_{ij} p_j(\mathbf{x}_{R_j}) x_j .
\end{equation}

% {\bf xx some of the above could be done more quickly, ... once I have $E$ and $R$
% this looks basically identical to the earlier setup. xx}

We assume that $f$ is a strictly concave function, and is continuously
differentiable. We will refer to the product of effort $x_i$ and
productivity $p_i(\mathbf{x}_{R_i})$ as the output, and denote it by $y_i$.
In this context, we do not impose any condition on the nature of the
productivity function $p_i(\cdot)$, and as before, this game is also not
necessarily a concave game and the existence of a Nash equilibrium is
not always guaranteed. 
%We will provide both
%necessary and sufficient conditions for the existence of the
%equilibrium. In later sections, we will argue about uniqueness.

\subsection{Results}

The payoff function given by Equation (\ref{eq:general-payoff}) induces a game between the influencers and the influencees. In addition, as before, every agent faces a trade-off when deciding how much production and communication effort to exert. We will use the following facts which are well known from real analysis~\citep{rudin1964principles}.
% From the first order necessary conditions to maximize the payoff function we get the following theorem.

\begin{fact}
 If a function is continuously differentiable and strictly concave, its derivative is continuous and monotone decreasing.
\end{fact}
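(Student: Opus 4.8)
The plan is to treat the two assertions separately, since they are of quite different weight. Continuity of the derivative is immediate: ``continuously differentiable'' means precisely that $f'$ exists and is continuous, so there is nothing to do there. All the content is in showing that strict concavity forces $f'$ to be (strictly) decreasing. The key tool I would invoke is the ``three-chords'' form of concavity: for any $x_1 < z < x_2$ in the domain, writing $z=\lambda x_1+(1-\lambda)x_2$ with $\lambda\in(0,1)$ and substituting into the definition of strict concavity and rearranging yields
\[
\frac{f(z)-f(x_1)}{z-x_1}\;>\;\frac{f(x_2)-f(x_1)}{x_2-x_1}\;>\;\frac{f(x_2)-f(z)}{x_2-z}.
\]

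Next I would fix arbitrary $x_1<x_2$ and any intermediate point $z$. From (plain) concavity, the secant slope $w\mapsto \frac{f(w)-f(x_1)}{w-x_1}$ is non-increasing in $w$, so letting $w\downarrow x_1$ gives $f'(x_1)\geq \frac{f(z)-f(x_1)}{z-x_1}$; symmetrically, the secant slope from the right endpoint gives $\frac{f(x_2)-f(z)}{x_2-z}\geq f'(x_2)$. Chaining these two bounds with the (strict) middle inequality of the three-chords relation above,
\[
f'(x_1)\;\geq\;\frac{f(z)-f(x_1)}{z-x_1}\;>\;\frac{f(x_2)-f(z)}{x_2-z}\;\geq\;f'(x_2),
\]
so $f'(x_1)>f'(x_2)$ for all $x_1<x_2$, i.e.\ $f'$ is strictly decreasing (and in particular monotone decreasing). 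If only ``non-increasing'' is wanted, the same two chord bounds with the non-strict chord comparison suffice.

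The one subtlety — really the only obstacle worth flagging — is preserving strictness: a more direct argument that passes to a limit in the strict chord inequalities would collapse them to non-strict ones, so strictness must be ``parked'' in a fixed intermediate point $z$ before any limit is taken, exactly as done above. An equivalent route, if preferred, is to replace the two limiting bounds by the mean value theorem ($f'(\xi_1)=\frac{f(z)-f(x_1)}{z-x_1}$ for some $\xi_1\in(x_1,z)$, and similarly on $(z,x_2)$) and again sandwich using the strict middle chord; the chord-only version avoids even invoking MVT, so that is the one I would write up.
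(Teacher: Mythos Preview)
Your argument is correct. The three-chords inequality together with the secant-slope monotonicity bounds $f'(x_1)\geq \frac{f(z)-f(x_1)}{z-x_1}$ and $\frac{f(x_2)-f(z)}{x_2-z}\geq f'(x_2)$ is a clean way to get strict monotonicity of $f'$, and your remark about anchoring the strict inequality at a fixed intermediate $z$ before passing to any limit is exactly the point that needs care.

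As for comparison with the paper: there is nothing to compare against. The paper records this as a \emph{Fact} and supplies no proof; it is invoked only to justify that $f'$ is continuous and strictly decreasing so that $\ell=(f')^{-1}$ exists and is itself continuous and decreasing (their Fact~\ref{fact:invertible}). Your write-up therefore goes well beyond what the paper does. If you want to match the paper's level, a one-line appeal to the standard characterization ``$f$ strictly concave and differentiable $\Rightarrow$ $f'$ strictly decreasing'' would suffice; if you want to keep a self-contained argument, what you have is fine and arguably cleaner than the MVT alternative you mention, since it never needs differentiability except at the two endpoints where the derivative is being compared.
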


\begin{fact}
 \label{fact:invertible}
 A continuous and monotone decreasing function is invertible and the inverse is also continuous and monotone decreasing.
\end{fact}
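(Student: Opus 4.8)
The plan is to treat this as the standard real-analysis fact: for $g:I\to\mathbb{R}$ continuous and strictly decreasing on an interval $I$, with range $J:=g(I)$, show that $g^{-1}:J\to I$ exists and is continuous and strictly decreasing. (I read ``monotone decreasing'' here as \emph{strictly} decreasing, since a non-strictly monotone function need not be injective; this matches Fact~1, as the derivative of a strictly concave differentiable function is strictly decreasing.) I would handle the three assertions in order, as invertibility and monotonicity of the inverse are immediate and the only real content is continuity of the inverse.

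\emph{Invertibility and monotonicity of the inverse.} If $x_1<x_2$ in $I$, strict monotonicity gives $g(x_1)>g(x_2)$, so $g$ is injective; hence $g:I\to J$ is a bijection and $g^{-1}:J\to I$ is well defined. For monotonicity, take $y_1<y_2$ in $J$ and put $x_k=g^{-1}(y_k)$; if $x_1\le x_2$ then $g(x_1)\ge g(x_2)$, i.e.\ $y_1\ge y_2$, a contradiction, so $x_1>x_2$. Thus $g^{-1}$ is strictly decreasing.

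\emph{Continuity of the inverse.} First, by the intermediate value theorem the continuous image $J=g(I)$ of an interval is again an interval. Fix $y_0\in J$, set $x_0=g^{-1}(y_0)$, and let $\varepsilon>0$; when $x_0$ is interior to $I$ I may shrink $\varepsilon$ so that $x_0-\varepsilon,x_0+\varepsilon\in I$. Strict monotonicity of $g$ gives $g(x_0+\varepsilon)<y_0<g(x_0-\varepsilon)$, so $\delta:=\min\{\,y_0-g(x_0+\varepsilon),\ g(x_0-\varepsilon)-y_0\,\}>0$. For any $y\in J$ with $|y-y_0|<\delta$ we have $g(x_0+\varepsilon)<y<g(x_0-\varepsilon)$, and applying the (now established) strictly decreasing map $g^{-1}$ yields $x_0-\varepsilon<g^{-1}(y)<x_0+\varepsilon$, i.e.\ $|g^{-1}(y)-x_0|<\varepsilon$. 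Hence $g^{-1}$ is continuous at $y_0$, which was arbitrary; when $x_0$ is an endpoint of $I$ the same argument goes through with a one-sided $\varepsilon$-interval.

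The continuity of the inverse is the one step needing any care, and the only mild nuisance there is the endpoint bookkeeping (where only a one-sided $\varepsilon$-neighborhood is available). An alternative that sidesteps the explicit $\varepsilon$--$\delta$ is to observe that a strictly monotone function whose range is an interval can have no jump discontinuity, since a jump would force its range to omit an open subinterval; applying this to $g^{-1}$, whose range $I$ is an interval and whose domain $J$ is an interval (by the IVT step above), gives continuity at once. Either route is routine.
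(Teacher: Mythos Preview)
Your argument is correct and carefully handles the one nontrivial point (continuity of the inverse), including the endpoint case and the alternative ``no jump discontinuity'' approach; you also rightly note that ``monotone decreasing'' must be read as \emph{strictly} decreasing for invertibility to hold, which is consistent with the intended application to $f'$ for strictly concave $f$.

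As for comparison with the paper: there is nothing to compare. The paper states this as a \texttt{Fact} without proof, treating it as a standard real-analysis result to be used later (to conclude that $\ell=f'^{-1}$ exists and is continuous and monotone decreasing). Your write-up supplies exactly the routine verification the paper omits.
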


Using the above two facts, we see that the inverse of $f'$ exists and is monotone decreasing. Let us denote $f'^{-1}$ by $\ell$. Let us define two functions $g$ and $T$ similar to that defined in Section~\ref{sec:exponential-productivity}.

\begin{align}
 g_i(\mathbf{x}) &= \sum_{j \in E_i} h_{ij} \left . \left ( - \frac{1}{p_i(\mathbf{x}_{R_i})} \frac{\partial p_j(\mathbf{x}_{R_j})}{\partial x_i} x_j \right ) \right |_{\mathbf{x}}, \label{eq:g-general} \\
T(x) &= \min \{ \max \{0, x\}, 1\}. \label{eq:T-general}
\end{align}

\begin{fact}
 \label{fact:T-continuous}
 The function $T$ is continuous.
\end{fact}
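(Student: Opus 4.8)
\textbf{Proof proposal for Fact~\ref{fact:T-continuous}.}

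The plan is to observe that $T$ is built from the pointwise operations $\max$ and $\min$ applied to continuous functions, and that these operations preserve continuity. Concretely, first I would recall the standard fact that if $g,h:\mathbb{R}\to\mathbb{R}$ are continuous then so are $x\mapsto\max\{g(x),h(x)\}$ and $x\mapsto\min\{g(x),h(x)\}$; this follows, for instance, from the identities $\max\{a,b\}=\tfrac12(a+b+|a-b|)$ and $\min\{a,b\}=\tfrac12(a+b-|a-b|)$ together with continuity of the absolute value. Applying this with $g(x)=0$ (constant, hence continuous) and $h(x)=x$ (identity, continuous) gives that $x\mapsto\max\{0,x\}$ is continuous; applying it again with the continuous function $x\mapsto\max\{0,x\}$ and the constant $1$ yields continuity of $T(x)=\min\{\max\{0,x\},1\}$.

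As an alternative I would give the direct piecewise argument: $T(x)=0$ for $x\le 0$, $T(x)=x$ for $0\le x\le 1$, and $T(x)=1$ for $x\ge 1$, each piece being continuous on its (closed) domain, and the one-sided limits agreeing at the breakpoints $x=0$ (value $0$) and $x=1$ (value $1$). Continuity on the overlaps $\{0\}$ and $\{1\}$ then follows from the pasting lemma for continuous functions on closed sets.

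There is essentially no obstacle here: the statement is elementary and is included only so that it can be invoked when arguing about the equilibrium map in Section~\ref{sec:general-model} (e.g., to apply a fixed-point theorem to a composition involving $T$, $\ell=f'^{-1}$, and the $g_i$). The only thing worth being careful about is making sure the chosen composition order matches the definition in Equation~\eqref{eq:T-general}, i.e.\ that one truncates below at $0$ before capping above at $1$ — the result is of course the same either way since $T$ is just projection onto $[0,1]$, but stating it cleanly avoids any confusion downstream.
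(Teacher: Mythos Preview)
Your argument is correct; indeed both the $\max/\min$ composition approach and the piecewise-with-pasting argument establish continuity of $T$ without difficulty. Note, however, that the paper does not supply a proof at all: it simply records this as a ``Fact'' and invokes it later (in Lemma~\ref{lma:F-continuous}) alongside Fact~\ref{fact:invertible}, so your write-up is already more detailed than what appears in the source.
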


\begin{lemma}[Necessary condition for Nash equilibrium]
\label{lem:general-necessary-NE}
 If a Nash equilibrium exists for the effort game in a influencer-influencee network, the effort profile $(x^*_i, x^*_{-i})$ must satisfy,
\begin{align}
 \label{eq:general-NE}
  x_i^* = T \circ \ell \circ g_i(\mathbf{x}^*), \ \forall i \in N.
\end{align}
\end{lemma}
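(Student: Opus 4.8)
The plan is to characterize best responses coordinate by coordinate. At any Nash equilibrium $\mathbf{x}^* = (x_i^*, x_{-i}^*)$, for each $i$ the effort $x_i^*$ maximizes the one-variable map $x_i \mapsto u_i(x_i, x_{-i}^*)$ over the interval $[0,1]$. First I would differentiate the payoff in Equation~(\ref{eq:general-payoff}) with respect to $x_i$: the factor $p_i(\mathbf{x}_{R_i})$ does not depend on $x_i$, since $i \notin R_i$; and in the sum $\sum_{j \in E_i} h_{ij} p_j(\mathbf{x}_{R_j}) x_j$ only the productivities $p_j(\mathbf{x}_{R_j})$ depend on $x_i$ (the efforts $x_j$, $j \ne i$, are held fixed). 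Hence, evaluating at $\mathbf{x}^*$ and using the definition of $g_i$ in Equation~(\ref{eq:g-general}) together with $p_i(\mathbf{x}^*_{R_i}) > 0$ (implicit in that definition),
\[
\left.\frac{\partial u_i}{\partial x_i}\right|_{\mathbf{x}^*}
= p_i(\mathbf{x}^*_{R_i})\, f'(x_i^*) + \sum_{j \in E_i} h_{ij} \left.\frac{\partial p_j(\mathbf{x}_{R_j})}{\partial x_i}\right|_{\mathbf{x}^*} x_j^*
= p_i(\mathbf{x}^*_{R_i})\bigl(f'(x_i^*) - g_i(\mathbf{x}^*)\bigr).
\]

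Since $f$ is continuously differentiable and strictly concave, $f'$ is continuous and strictly decreasing, hence invertible with continuous strictly decreasing inverse $\ell$ (Fact~\ref{fact:invertible}); write $z_i := \ell(g_i(\mathbf{x}^*))$, equivalently $f'(z_i) = g_i(\mathbf{x}^*)$. I would then apply the first-order necessary conditions for $x_i^*$ to maximize the differentiable map above on $[0,1]$, in three cases. If $x_i^* \in (0,1)$, the derivative vanishes at $x_i^*$, so $f'(x_i^*) = g_i(\mathbf{x}^*) = f'(z_i)$, whence $x_i^* = z_i \in (0,1)$ and $T(z_i) = x_i^*$. If $x_i^* = 0$, the one-sided derivative is $\le 0$, so $f'(0) \le g_i(\mathbf{x}^*)$; applying the decreasing function $\ell$ gives $0 \ge z_i$, so $T(z_i) = 0 = x_i^*$. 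If $x_i^* = 1$, the one-sided derivative is $\ge 0$, so $f'(1) \ge g_i(\mathbf{x}^*)$, whence $1 \le z_i$ and $T(z_i) = 1 = x_i^*$. In every case $x_i^* = T(z_i) = T \circ \ell \circ g_i(\mathbf{x}^*)$, which is Equation~(\ref{eq:general-NE}).

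I do not anticipate a real obstacle here: because we are only establishing a \emph{necessary} condition, the non-concavity of $u_i$ in $x_i$ is irrelevant — the one-sided first-order conditions hold at any maximizer whether or not the objective is concave, so no global-optimality argument is needed. The two points that need a little care are (a) confirming that the dependence of $u_i$ on $x_i$ resides only in the factors $p_j(\mathbf{x}_{R_j})$ with $j \in E_i$, so that $\partial u_i / \partial x_i$ reduces exactly to $p_i(f'(x_i) - g_i(\mathbf{x}))$; and (b) the behaviour at the endpoints of $[0,1]$, where stationarity is replaced by the appropriate one-sided inequality — this is precisely what the clamping operator $T$ of Equation~(\ref{eq:T-general}) encodes, which is why the three cases fold into the single fixed-point identity. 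One should also note the implicit standing assumptions that $p_i(\cdot) > 0$ and that each $p_j(\mathbf{x}_{R_j})$ is differentiable in $x_i$ (the latter being presupposed by the definition of $g_i$).
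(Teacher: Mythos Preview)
Your proposal is correct and follows essentially the same approach as the paper: both compute $\partial u_i/\partial x_i = p_i\bigl(f'(x_i) - g_i(\mathbf{x})\bigr)$ and then split into the interior and two boundary cases, showing that each collapses to $x_i^* = T\circ\ell\circ g_i(\mathbf{x}^*)$. The only cosmetic difference is that the paper phrases the case analysis via KKT multipliers for the constraints $x_i\ge 0$ and $1-x_i\ge 0$, whereas you use the equivalent elementary one-sided first-order conditions on $[0,1]$ directly.
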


To illustrate what this necessary condition means, let us assume, for simplicity, that we do not hit the edges of the truncation function $T$. Therefore we can rewrite Equation (\ref{eq:general-NE}) as, 

\begin{equation}
 \label{eq:general-NE-reorg}
  f'(x^*_i) = \sum_{j \in E_i} h_{ij} \left . \left ( - \frac{1}{p_i} \frac{\partial p_j}{\partial x_i} x_j \right ) \right |_{\mathbf{x}^*} = \sum_{j \in E_i} h_{ij} \left . \left ( - \frac{1}{p_i} \frac{\partial y_j}{\partial x_i} \right ) \right |_{\mathbf{x}^*}
\end{equation}

Where $y_j = p_j x_j$ is the output of node $j$. We have dropped the arguments of $p_i$ and $p_j$ for brevity of notation. The expression on the LHS is the rate of change of direct benefit for agent $i$. The RHS is the rate at which the passive output of agent $i$ changes w.r.t.\ his effort $x_i$ and productivity $p_i$. If the LHS is larger, the agent would gain more at the margin by increasing $x_i$. This is because the derivative $\left ( - \partial y_j/\partial x_i \right )$ is non-negative since $\partial p_j/\partial x_i$ is always non-positive. Similarly, if the RHS was larger, the agent could gain at the margin by decreasing $x_i$. Hence Equation (\ref{eq:general-NE-reorg}) resembles a rate balance equation (or demand-supply curve) where the rate of effective direct payoff matches the rate of passive payoffs.

% \begin{figure}[t!]
\begin{wrapfigure}{R}{0.5\columnwidth}
\centering
 \includegraphics[width=0.4\columnwidth]{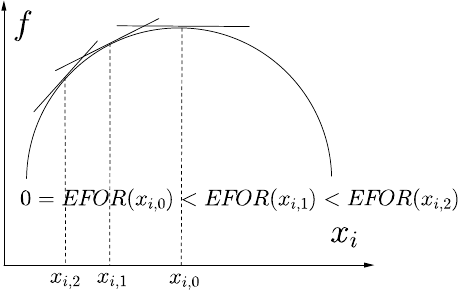}
\caption{Impact on the production effort of a node having higher EFOR from the influencees.}
\label{fig:general-NE}
% \vspace{-0.5cm}
\end{wrapfigure}
% \end{figure}

Let us define the {\em effective fractional output rate} (EFOR) at $\mathbf{x}$ as $\sum_{j \in E_i} h_{ij} \left . \left ( - \frac{1}{p_i} \frac{\partial y_j}{\partial x_i} \right ) \right |_{\mathbf{x}}$.
In some settings, e.g., if $p_i(\mathbf{x}_{R_i}) = \prod_{k \in R_i} (1-x_k)$, the fractional output rate $\left ( - \frac{1}{p_i} \frac{\partial y_j}{\partial x_i} \right )$ can be independent of the production effort of $i$, i.e., $x_i$. In such settings, Equation (\ref{eq:general-NE-reorg}) shows that if the EFOR of node $i$ increases, the equilibrium for node $i$ will move in a direction that decreases production effort $x_i$. This happens because the slope of $f$ in equilibrium is always non-negative and its increase leads to a smaller $x_i$ because of the concavity of $f$. This phenomenon is graphically shown in Figure \ref{fig:general-NE}. This shows that the nodes having a higher EFOR, which is a function of the network position of an agent, can leverage more on the production efforts of the influencees.

Following Definition~\ref{def:EUF}, we define the {\em effort update function} (EUF) $F : [0,1]^n \to [0,1]^n$ for the general setting as, $F(\mathbf{x}) = T \circ \ell \circ g ( \mathbf{x} )$, where $T \circ \ell$ operates on the vector function $g$ element-wise.
Therefore, the question of existence of a Nash equilibrium of this effort game is the same as asking the question if the following fixed point equation has a solution: $ \mathbf{x} = F(\mathbf{x})$.

In the following, we provide a sufficient condition for existence of the Nash equilibrium, and its uniqueness.

\begin{lemma}
 \label{lma:F-continuous}
  For $p_i > 0$, for all $i \in N$, and continuously differentiable, $F(\mathbf{x})$ is continuous. 
\end{lemma}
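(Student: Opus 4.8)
The plan is to prove continuity of $F$ by exhibiting it as a composition of continuous maps, since $F(\mathbf{x}) = T \circ \ell \circ g(\mathbf{x})$ with $T \circ \ell$ applied coordinatewise to the vector $g(\mathbf{x}) = (g_1(\mathbf{x}),\dots,g_n(\mathbf{x}))$. It therefore suffices to check continuity of the three building blocks separately: the coordinate functions $g_i$, the inverse marginal $\ell = f'^{-1}$, and the truncation $T$; then invoke the standard facts that finite sums and products of continuous real-valued functions are continuous and that compositions of continuous functions are continuous.

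First I would argue that each $g_i$ is continuous on $[0,1]^n$. From \Cref{eq:g-general}, $g_i$ is a finite sum over $j \in E_i$ of terms of the form $h_{ij}\bigl(-\tfrac{1}{p_i(\mathbf{x}_{R_i})}\tfrac{\partial p_j(\mathbf{x}_{R_j})}{\partial x_i}\,x_j\bigr)$. The coordinate map $\mathbf{x}\mapsto x_j$ is continuous; the hypothesis that $p_j$ is continuously differentiable makes $\mathbf{x}\mapsto \tfrac{\partial p_j(\mathbf{x}_{R_j})}{\partial x_i}$ continuous; and the hypothesis $p_i>0$, together with continuity of $p_i$, makes $\mathbf{x}\mapsto \tfrac{1}{p_i(\mathbf{x}_{R_i})}$ continuous. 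Products and finite sums of continuous functions are continuous, so each $g_i$, and hence $g$ as a map $[0,1]^n \to \mathbb{R}^n$, is continuous. (In fact $g$ maps into $[0,\infty)^n$, since $h_{ij}\geq 0$, $p_i>0$, $x_j\geq 0$, and $\partial p_j/\partial x_i\leq 0$.) Next, $\ell$ is continuous: by the facts stated just above the lemma, $f'$ is continuous and strictly monotone decreasing because $f$ is strictly concave and continuously differentiable, so by \Cref{fact:invertible} its inverse $\ell = f'^{-1}$ is continuous and monotone decreasing on the range of $f'$; extended monotonically and continuously to all of $\mathbb{R}$, it remains continuous. Finally, $T$ is continuous by \Cref{fact:T-continuous}. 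Composing, $\mathbf{x}\mapsto T(\ell(g_i(\mathbf{x})))$ is continuous for every $i$, so $F$ is continuous.

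The only delicate point is giving meaning to $\ell$ when $g_i(\mathbf{x})$ lies outside the range $f'([0,1])$; I would dispose of this at the outset by fixing the canonical continuous monotone extension of $\ell$ to $\mathbb{R}$ (constant on the two complementary rays), which is harmless because the outer truncation $T$ clamps the value back into $[0,1]$ and does not change $F$. After that convention is in place, the argument is a routine composition-of-continuous-functions argument, so no step beyond this bookkeeping is expected to present any real obstacle.
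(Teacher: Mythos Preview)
Your proposal is correct and follows essentially the same approach as the paper: show $g$ is continuous from the hypotheses $p_i>0$ and continuous differentiability, invoke \Cref{fact:invertible} and \Cref{fact:T-continuous} for $\ell$ and $T$, and conclude by composition. You are simply more explicit than the paper, which compresses the argument for $g$ into one sentence and does not comment on the domain issue for $\ell$.
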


\begin{proof}
 Given $p_i > 0$, for all $i \in N$, and is continuously differentiable. Therefore, the function $g$, defined in Equation (\ref{eq:g-general}), is continuous in $\mathbf{x}$. Using Facts~\ref{fact:invertible} and \ref{fact:T-continuous}, we see that the functions $\ell$ and $T$ are continuous. Hence, $F \equiv T \circ \ell \circ g$ is continuous in $\mathbf{x}$.
\end{proof}

\begin{theorem}[Sufficient Condition for a Nash Equilibrium]
 \label{thm:general-sufficient-one-NE}
  For $p_i > 0$, for all $i \in N$, and continuously differentiable, the effort game has at least one Nash equilibrium.
\end{theorem}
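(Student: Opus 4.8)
The plan is to exhibit a Nash equilibrium as a fixed point of the effort update function $F = T \circ \ell \circ g$ defined just above the statement, and to obtain that fixed point from Brouwer's theorem. By Lemma~\ref{lem:general-necessary-NE}, every Nash equilibrium $\mathbf{x}^{*}$ satisfies $x_i^{*} = T(\ell(g_i(\mathbf{x}^{*})))$ for all $i$, i.e.\ it is a fixed point of $F$; conversely any fixed point of $F$ meets, coordinatewise, the first-order stationarity condition of each player's box-constrained problem $\max_{x_i \in [0,1]} u_i(x_i, \mathbf{x}_{-i})$. Hence it is enough to show that $F$ has a fixed point on $[0,1]^n$.

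To invoke Brouwer I would verify its three hypotheses. First, $[0,1]^n$ is a nonempty, compact, convex subset of $\mathbb{R}^n$. Second, $F$ is a self-map of $[0,1]^n$: each coordinate is $F_i = T \circ \ell \circ g_i$ and the truncation $T(x) = \min\{\max\{0,x\},1\}$ has range $[0,1]$, so $F(\mathbf{x}) \in [0,1]^n$ for every $\mathbf{x}$, no matter how large $\ell(g_i(\mathbf{x}))$ is (this value is finite because $p_i > 0$ keeps $g_i$ in Equation~(\ref{eq:g-general}) well-defined and finite). Third, $F$ is continuous, which is exactly Lemma~\ref{lma:F-continuous} under the present hypotheses ($p_i > 0$ and continuously differentiable): $g_i$ is continuous because its denominator $p_i$ is bounded away from $0$ and its numerator is $C^{1}$; $\ell = (f')^{-1}$ is continuous by Fact~\ref{fact:invertible} together with the Fact that a strictly concave, continuously differentiable function has a continuous, monotone decreasing derivative; and $T$ is continuous by Fact~\ref{fact:T-continuous}; so the composition is continuous. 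Brouwer then yields $\mathbf{x}^{*} \in [0,1]^n$ with $F(\mathbf{x}^{*}) = \mathbf{x}^{*}$.

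The step I expect to be the real obstacle is confirming that this Brouwer fixed point is a genuine Nash equilibrium and not merely a first-order-stationary profile. Unlike a concave game, $u_i$ need not be concave --- or even quasi-concave --- in $x_i$, since the indirect-payoff term $\sum_{j \in E_i} h_{ij} p_j(\mathbf{x}_{R_j}) x_j$ is only required to be non-increasing in $x_i$ and may be convex (as $e^{-\beta x_i}$ already is in the EP instance), so $u_i(\cdot, \mathbf{x}_{-i}^{*})$ can have a local minimum or an inflection where the stationarity equation, hence $F$, places its fixed point. To close this gap I would lean on the extra structure available here: $p_i > 0$, boundedness of the productivity factors in $[0,1]$, and monotonicity of $p_j$ in the influencers' efforts, which constrain the shape of $u_i(\cdot, \mathbf{x}_{-i}^{*})$ on $[0,1]$ (for the EP-type productivities it is convex-then-concave, so its maximizer is either the unique interior stationary point or the endpoint $x_i = 0$), and then check that the branch of the stationarity equation selected by $F$ is the dominating one, i.e.\ the global best response for each player. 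This is in contrast with the hierarchical EP model of Theorem~\ref{thm:exp-NE-necessary}, where a leaf-to-root backward induction both builds the equilibrium and certifies optimality level by level; on a general, possibly cyclic influencer--influencee network no such ordering is available, which is why the topological fixed-point argument is the natural route.
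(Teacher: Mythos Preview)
Your approach is exactly the paper's: invoke Lemma~\ref{lma:F-continuous} for continuity of $F$, apply Brouwer on $[0,1]^n$, and appeal to Lemma~\ref{lem:general-necessary-NE} to identify fixed points of $F$ with Nash equilibria. The paper's proof is two sentences and does not go beyond this.

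Your third paragraph in fact flags a point the paper glosses over. Lemma~\ref{lem:general-necessary-NE} is stated only as a \emph{necessary} condition (Nash $\Rightarrow$ fixed point of $F$), yet the paper's proof asserts the equivalence without argument. Your worry that a Brouwer fixed point might be merely first-order stationary rather than a genuine best response is legitimate for a non-concave game, and the paper does not address it; it simply treats the KKT characterization as if it were also sufficient. So you have not missed anything the paper supplies --- quite the opposite, you have been more scrupulous than the source.
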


\begin{proof}
 From \Cref{lem:general-necessary-NE}, we see that the Nash equilibrium of the effort game is same as the fixed point of the equation, $\mathbf{x} = F(\mathbf{x})$.
 Since $F$ is continuous (Lemma~\ref{lma:F-continuous}), Brouwer's fixed point theorem immediately ensures a fixed point of the above equation to exist. Hence, the effort game has at least one Nash equilibrium.
\end{proof}

% We can prove sufficient condition for a unique Nash equilibrium only for a restricted class of functions.
Let us use the shorthand $G \equiv \ell \circ g$. The following theorem provides a sufficient condition for the uniqueness of the Nash equilibrium.

\begin{theorem}[Sufficient Condition for unique Nash Equilibrium]
 \label{thm:general-linear-g-sufficient-unique-NE}
 If $\sup_{\mathbf{x}_0} |\nabla G (\mathbf{x}_0)| < 1$, then the Nash equilibrium effort profile $(x_i^*, x_{-i}^*)$ is unique and is given by Equation (\ref{eq:general-NE}).
\end{theorem}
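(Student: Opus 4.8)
The plan is to show that, under the stated hypothesis, the effort update function $F = T \circ \ell \circ g = T \circ G$ is a contraction on the complete metric space $[0,1]^n$, so that Banach's fixed point theorem yields a unique fixed point; by Lemma~\ref{lem:general-necessary-NE} and Theorem~\ref{thm:general-sufficient-one-NE} this fixed point is exactly the (necessarily unique) Nash equilibrium, and it satisfies Equation (\ref{eq:general-NE}). The structure mirrors the argument behind Theorem~\ref{thm:exponential-sufficiency} in the hierarchical case, but now the contraction estimate comes from a Jacobian bound rather than an explicit computation.

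First I would record that the truncation map $T$ of Equation (\ref{eq:T-general}), applied coordinatewise, is the metric projection onto the box $[0,1]^n$, and hence is non-expansive: for any $\mathbf{u},\mathbf{v}$, $\| T(\mathbf{u}) - T(\mathbf{v}) \| \le \| \mathbf{u} - \mathbf{v} \|$ in the vector norm compatible with the matrix norm $|\cdot|$ used in the statement (each scalar truncation $s \mapsto \min\{\max\{0,s\},1\}$ is $1$-Lipschitz, so this holds in the Euclidean and in the $\infty$-norm alike). Thus $T$ never increases distances and can be peeled off when estimating the Lipschitz constant of $F$.

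Next I would bound the Lipschitz constant of $G = \ell \circ g$ on the convex set $[0,1]^n$ via the mean value inequality for vector-valued maps: for $\mathbf{x},\mathbf{y} \in [0,1]^n$ the segment joining them lies in $[0,1]^n$, so $G(\mathbf{x}) - G(\mathbf{y}) = \left( \int_0^1 \nabla G(\mathbf{y} + t(\mathbf{x}-\mathbf{y}))\, dt \right)(\mathbf{x}-\mathbf{y})$, whence $\| G(\mathbf{x}) - G(\mathbf{y}) \| \le \big( \sup_{\mathbf{x}_0} |\nabla G(\mathbf{x}_0)| \big)\, \| \mathbf{x} - \mathbf{y} \| =: \kappa \| \mathbf{x}-\mathbf{y} \|$ with $\kappa < 1$ by hypothesis. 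Composing with the non-expansive $T$ gives $\| F(\mathbf{x}) - F(\mathbf{y}) \| = \| T(G(\mathbf{x})) - T(G(\mathbf{y})) \| \le \kappa \| \mathbf{x} - \mathbf{y} \|$, i.e.\ $F$ is a $\kappa$-contraction of $[0,1]^n$ into itself. Since $[0,1]^n$ is closed and hence complete, Banach's fixed point theorem gives a unique $\mathbf{x}^*$ with $\mathbf{x}^* = F(\mathbf{x}^*)$. Theorem~\ref{thm:general-sufficient-one-NE} guarantees at least one Nash equilibrium, and Lemma~\ref{lem:general-necessary-NE} shows every Nash equilibrium is a fixed point of $F$; therefore the Nash equilibrium is unique, equals $\mathbf{x}^*$, and by the definition of $F$ satisfies Equation (\ref{eq:general-NE}). (As a by-product one gets global convergence of best-response dynamics, as in the hierarchical case.)

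The main obstacle is the regularity needed to make the mean value inequality legitimate: one needs $G = \ell \circ g$ to be well defined, i.e.\ $g(\mathbf{x})$ must lie in the domain of $\ell = f'^{-1}$ for every $\mathbf{x} \in [0,1]^n$, and differentiable on $[0,1]^n$ with $|\nabla G|$ understood as the operator norm subordinate to the chosen vector norm. This is where the continuous differentiability of the $p_i$'s and the strict concavity / $C^1$ smoothness of $f$ (so that $\ell$ is $C^1$ wherever $f'' \ne 0$) enter; one should state explicitly that $G$ is assumed $C^1$ on the relevant domain, or at least locally Lipschitz with the essential supremum of $\|\nabla G\|$ strictly below $1$, so that the integral representation above is valid. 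Everything else — non-expansiveness of $T$, completeness of $[0,1]^n$, and the Banach argument — is routine.
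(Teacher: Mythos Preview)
Your proposal is correct and follows essentially the same approach as the paper's proof: both show that $F = T \circ G$ is a contraction by using the non-expansiveness of the truncation $T$ to reduce to $\|G(\mathbf{x})-G(\mathbf{y})\|$, then bound this via a mean-value estimate $\|G(\mathbf{x})-G(\mathbf{y})\| \le |\nabla G(\mathbf{x}_0)|\,\|\mathbf{x}-\mathbf{y}\|$, and conclude by the contraction hypothesis and Banach's theorem. Your write-up is in fact more careful than the paper's, which compresses the argument to two lines and does not spell out the non-expansiveness of $T$, the integral/mean-value justification, or the regularity caveats you flag in your final paragraph.
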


\begin{proof}
 The key here is to show that $F$ is a contraction. We follow the steps of Theorem~\ref{thm:exponential-sufficiency} as follows:
 \begin{align*}
  ||F(\mathbf{x}) - F(\mathbf{y})|| \leqslant ||G(\mathbf{x}) - G(\mathbf{y})|| &\leqslant |\nabla G(\mathbf{x}_0)| \cdot ||\mathbf{x} - \mathbf{y}||.
 \end{align*}
 This is a contraction as $\sup_{\mathbf{x}_0} |\nabla G(\mathbf{x}_0)| < 1$.
\end{proof}

\subsubsection{Interpretation of the sufficient condition of the uniqueness}

The sufficient condition given by Theorem~\ref{thm:general-linear-g-sufficient-unique-NE} is a technical one. We now discuss an interesting geometric interpretation of this condition. By the Taylor expansion of $G$ with first order remainder term, we get,
\[G(\mathbf{x}) - G(\mathbf{y}) = \nabla G(\mathbf{x}_0) \cdot (\mathbf{x} - \mathbf{y}) .\]
Where $\mathbf{x}_0$ lies on the line joining $\mathbf{x}$ and $\mathbf{y}$. Using singular value decomposition, we get, $\nabla G(\mathbf{x}_0) = U_0 \Sigma_0 V_0^{\top}$. Therefore, for each pair of points $\mathbf{x}$ and $\mathbf{y}$, we can transform the space of efforts with a pure rotation as follows.
\begin{align*}
 G(\mathbf{x}) - G(\mathbf{y}) &= U_0 \Sigma_0 V_0^{\top} \cdot (\mathbf{x} - \mathbf{y}), \\
 \Rightarrow \quad U_0^{\top} (G(\mathbf{x}) - G(\mathbf{y})) &= \Sigma_0 \cdot (\mathbf{\bar{x}} - \mathbf{\bar{y}}), \mbox{ where } \mathbf{\bar{x}} = V_0^{\top} \mathbf{x}, \mathbf{\bar{y}} = V_0^{\top} \mathbf{y}\\
 \Rightarrow \quad R(\mathbf{\bar{x}}) - R(\mathbf{\bar{y}}) &= \Sigma_0 \cdot (\mathbf{\bar{x}} - \mathbf{\bar{y}}), \mbox{ where } R \equiv U_0^{\top} G V_0.
\end{align*}
Hence, for any pair of points $\mathbf{x}$ and $\mathbf{y}$, we can rotate the space so that the effect of the deviation to $\mathbf{x}$ from $\mathbf{y}$ can be captured by a weight on each of the coordinates in the rotated space. Here, the diagonal matrix $\Sigma_0$ contains the weights along its diagonal. 

Theorem~\ref{thm:general-linear-g-sufficient-unique-NE} says that for any point $\mathbf{x}_0$, if the absolute value of all the elements of this diagonal matrix is smaller than unity, the uniqueness of the Nash equilibrium is guaranteed. Let us denote the rotated vector of $\mathbf{x}_0$ by $\mathbf{z}_0 := V_0^{\top} \mathbf{x}_0$. The diagonal elements can be written w.r.t.\ the vectors in the rotated space as,
\[(\Sigma_0)_{ii} = \sum_{j \in E_i} h_{ij} \left . \left ( - \frac{1}{p_i} \frac{\partial^2 p_j}{\partial z_i^2} z_j \right ) \right |_{\mathbf{z}_0} = \frac{\partial \text{EFOR}(\mathbf{z}_0)}{\partial \mathbf{z}_{i,0}} . \]

In other words, the diagonal elements are the rate of change of EFOR at $\mathbf{z}_0$.
Having the rate of change of EFOR bounded by 1 is a sufficient condition for a unique Nash equilibrium. One can think of the EFOR as the product of two effects: (1) the rate of change in productivity, which increases the payoff of the influencees, (2) the reward share $h_{ij}$'s. The sufficient condition essentially says that the net effect should not be too large in order to guarantee unique equilibrium. 
% 
% \begin{figure}[t!]
\begin{wrapfigure}{R}{0.5\columnwidth}
\centering
 \includegraphics[width=0.4\columnwidth]{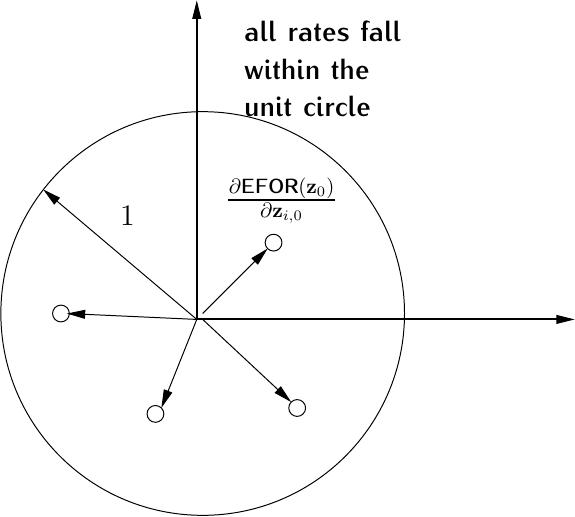}
\caption{Geometric interpretation of the condition for uniqueness of the Nash equilibrium.}
\label{fig:rou-NE}
\end{wrapfigure}
% \end{figure}
% 

Figure~\ref{fig:rou-NE} shows a graphical illustration of the
phenomenon in polar co-ordinates, where the directions represent that of the vectors. The results say that if for any vector, the singular values of the Jacobian matrix of $G$ at that point lies entirely within the unit ball, then there exists an unique Nash equilibrium.
This is similar to the feedback loop gain of a feedback controller, where the closed loop gain being smaller than unity ensures stability. We find this natural parallel between notions of stability (from control theory) and uniqueness of Nash equilibria interesting.
% {\bf xx don't think people will get this connection xx}

% \pagebreak
\section{Related Work}

\if 0
While a complete survey of the literature in organizational theory in
the computer and social sciences is beyond the scope of this work, a
number of excellent surveys, such as \cite{van1997state,
horling2004survey}, and books
\cite{nohria1994networks,daft2009organization} exist . Here we
describe the most relevant multi-agent work in this area, heavily
biased toward the computer science perspective we take in our work.
\fi
The study of effort levels in network games, where an agent's utility depends on actions of neighboring agents has recently received much attention \citep{Galeotti2010}. For example, \citet{ballester06} show how the level of activity of a given agent depends on the Bonacich centrality of the agent in the network, for a specific utility structure that results in a concave game. Our model differs in two aspects: (a) we have multiple types of efforts (namely production and communication) which has different nonlinear correlation among the agents, and (b) utilities are non-concave. In addition, our results give a design recipe for the reward sharing scheme. \citet{rogers2008strategic} analyzes the efficiency of equilibria in two specific types of games (i) `giving' and (ii) `taking', where an edge means utility is sent on an edge. A strategic model of effort is discussed in the public goods model of \citet{Bramoulle2007}, where utility is concave in individual agents' efforts, and the structures of the Nash and 
stable equilibria are shown. Their model applies to a very specific utility structure where the same benefit of the `public good' is experienced by all the first level neighbors on a graph. In our model, the individual utilities can be asymmetric, and depend on the efforts and 
reward shares in multiple levels on the graph.
% \dcp{should cite Kranton paper here instead of Corbo}
% Of particular interest to us is \cite{corbo2007importance}, who study a model with positive and negative bilateral influences in networks with agents who have linear quadratic utilities. 
Building on these efforts our utility model cleanly separate the effects of two types of influence, that we termed information and incentives.
% In particular, we are able to demonstrate the existence of equilibria in non-concave effort games on such networks. 

The DARPA Red Balloon Challenge, and particularly the hierarchical network and specific reward structure used by the winning MIT team \citep{pickard-etal11MIT}, has led to a renewed interest in the analysis of effort exerted by agents in networks. The winning
team's strategy, utilized a recursive incentive mechanism.
% which allowed then to locate $10$ balloons, hidden across the United States, in under nine hours. On a theoretical side they analyze the properties of their incentive mechanism. 
Our results show that, in this case for example, too much reward sharing encourages managers to spend more time recruiting or managing and not enough time searching or working, though we do not study network formation games here.

\if 0
From the perspective of the complexity of computing these equilibria, seeking to replicate the representational and computational benefits of graphical models in game theoretic multi-agent systems has been the focus of much work. A number of new game representations have been introduced in the past few years that exploit the properties of the graphical network structure to represent games more efficiently, including graphical games \cite{kearns2001graphical}, multi-agent influence diagrams (MAIDs) \cite{koller2003multi}, Game nets \cite{la2000game}, action-graph games \cite{jiang2011action} and Local Effect Games \cite{leyton2003local}.
%  Note that since graphical games have discrete finite action sets, by the fundamental result of Nash there always exists a Nash equilibrium in the state of joint mixed strategies or product distributions.
 Perhaps the most interesting and direct connection between graphical games and graphical models is \cite{daskalakis2006computing}, who present a generic mapping between Graphical Games and Markov Random
Fields so that pure Nash equilibria in the former can be found by statistical inference on the latter. In our work we study games where agents have continuous actions spaces (their effort levels) and so questions of existence and uniqueness are non-trivial. However, we are still able to show that for hierarchical tree structured organizational graphs exploiting the structure of the influence of ancestors or descendants can lead to fast algorithms for computing the effort equilibria.
\fi

The literature on strategic social network formation games and organizational design is vast \citep{jackson2009networks,harris2002organization}.
% In addition to \cite{corbo2007importance} mathematical models of formation of rational organizational hierarchies are studied in \cite{goubko2008mathematical} including the reasons why hierarchies are useful (data processing, decision making and others). Further work studies optimal hierarchies for cost functions that can be written as the sum of homogeneous functions \cite{goubko2010optimal}. In this paper, we show that it is often possible to derive expressions or good bounds on the Price of Anarchy (PoA) for important classes of hierarchies. 
%
We use the Price of Anarchy (PoA) introduced by \citet{koutsoupias1999worst} to measure the sub-optimality in outcome efforts, as a function of network structure and incentives, due to the self interested nature of agents.
% In addition, for certain types of hierarchies we show that good communication is particularly important and that the choice of profit sharing structure can be used to reduce the PoA.
In the network contribution games literature, the PoA has been investigated in different contexts. \citet{Anshelevich2012} consider a model where an agent's contribution locally benefit the nodes who share an edge with him, and give existence and PoA results for pairwise equilibrium for different contribution functions. The PoA in cooperative network formation is considered by \citet{demaine2009price}, while \citet{roughgarden2005selfish, garg2005price} have considered the question in a selfish network routing context. Our setting is different from all of these since in our model the strategies are the efforts of the agents, which distinguishes it from the network formation and selfish routing literature, and we use multiple levels of information and reward sharing and study utilities that are asymmetric even for the neighboring nodes in the network, which distinguishes itself from the network contribution games.

% \dcp{this is not the original cite for PoA. Also, it will be essential
% to report on PoA results in the current networks literature. e.g.,
% for contribution games. I think Elliot Anshelevich and Martin Hoefer
% have quite a lot of work on this kind of thing. What about Immorlica
% and Rogers. Do they have anything?}

\section{Summary and Future Work}

In this paper, we build on the papers by \citet{Bramoulle2007,ballester06} and develop an understanding of the effort levels in influencer-influencee networks. Taking a game theoretic perspective, we introduce a general utility model which results in a non-concave game, but are able to show results on the existence and uniqueness of Nash equilibrium efforts. For the ease of exposition, we focused on hierarchical networks, and with the EP model we found closed form expressions and bounds on the PoA for balanced hierarchies. These results give us the insight on the importance of communication in hierarchies on the design of efficient networks. At the same time, for a given network structure and communication level, we give a design recipe for the reward sharing in order to achieve highly productive output, and thereby minimize the PoA. 
% If reward sharing is too high, then managers free-ride resulting in a large PoA, an observation we believe is novel to our work.

The connection between matrix stability and uniqueness of Nash equilibria that arose in our work, is of particular interest to us for future research. In particular, for the general networks there was a direct interpretation of the uniqueness condition in terms of a Jacobian matrix stability. This stability property is directly related to the contraction property that shows that agents following local updates on effort levels will converge to the Nash equilibrium another desirable property. Pursuing these connections in the investigation of reward share design where individual employees behave in a strategic way in organizational networks is an important direction of future research.

% Bibliography
% \begin{spacing}{0.8}

% \bibliographystyle{abbrvnat}
% \bibliography{master08042013}

% \bibliography{master06022013}
% \end{spacing}

\pagebreak

\appendix

\section*{Appendices}
\setcounter{section}{0}

\section{Proofs for the Exponential Productivity Model}
\label{app:EP-proofs}

\subsection{Proof of Theorem \ref{thm:exp-NE-necessary}}

\begin{proof}
 The argument for the existence of a Nash equilibrium is straightforward in this particular setting. We see that because of the hierarchical structure of the network, the leaf nodes will always put unit effort, i.e., $\mathbf{x}^*_{\text{leaves}} = 1$. To compute the equilibrium in the level above the leaves one can run a backward induction algorithm to maximize \ref{eq:general-payoff-first} at each level, where the equilibrium efforts in the levels below is already computed by the algorithm. Since, all $p_i$'s are bounded and the maximization is over $x_i \in [0,1]$, a compact space, maxima always exists. Hence, a Nash equilibrium always exists.

 Now we show that a Nash equilibrium profile $(x_i^*, x_{-i}^*)$ must satisfy Equation (\ref{eq:exponential-NE}). For notational convenience, we drop the arguments of $p_i$ and $p_{ij}$, which are functions of $\mathbf{x}_{P_{\theta \to i}}$ and $\mathbf{x}_{P_{i_- \to j}}$ respectively.
 Each agent $i \in N$ solves the following optimization problem.
  \begin{align}
\label{eq:opt-necessary}
\begin{array}{cc}
 \max_{x_i} & u_i(x_i, x_{-i}) \\
 \mbox{ s.t. } & x_i \geqslant 0
\end{array}
  \end{align}
  Combining Equations (\ref{eq:general-payoff-first}), (\ref{eq:f-in-EP}), and (\ref{eq:exponential-productivity}), we get,
  \[u_i(x_i,x_{-i}) = p_i(\mathbf{x}_{P_{\theta \to i}}) \left (x_i - \frac{x_i^2}{2} - b \frac{(1-x_i)^2}{2} \right ) + \sum_{j \in T_i \setminus \{i\}} h_{ij} p_j(\mathbf{x}_{P_{\theta \to j}}) x_j . \]
 Note that we have relaxed the constraint from $0 \leqslant x_i \leqslant 1$. The first additive term in the utility function has the peak at $x_i = 1$. The second term has $e^{\beta x_i}$ in the $p_j$, which is decreasing in $x_i$. Therefore, the optimal $x_i$ that maximizes this utility will be $\leqslant 1$. Hence, in this problem setting, the optimal solution for both the exact and the relaxed problems is the same. So, it is enough to consider the above problem. For this non-linear optimization problem, we can write down the Lagrangian as follows.
 \[{\cal L} = u_i(x_i, x_{-i}) + \lambda_i x_i, \ \lambda_i \geqslant 0.\]
 The KKT conditions for this optimization problem (\ref{eq:opt-necessary}) are:
 \begin{align}
  \frac{\partial {\cal L}}{\partial x_i} &= 0, \Rightarrow \frac{\partial }{\partial x_i} u_i(x_i, x_{-i}) + \lambda_i = 0,  & \label{eq:KKT-1} \\
  \lambda_i x_i &= 0, & \mbox{ complementary slackness.}\label{eq:KKT-2} 
 \end{align}
 \noindent {\em Case 1:} $\lambda_i = 0$, then from Equation (\ref{eq:KKT-1}) we get, 
\begin{align}
  & \quad p_i (1 - x_i + b (1-x_i)) + \sum_{j \in T_i \setminus \{i\}} h_{ij} \frac{\partial p_j}{\partial x_i} x_j = 0 \nonumber \\
 \Rightarrow & \quad p_i (1+b) (1-x_i) - \beta \sum_{j \in T_i \setminus \{i\}} h_{ij} p_j x_j = 0 \nonumber \\
 \Rightarrow & \quad 1-x_i = \frac{\beta}{1+b} \sum_{j \in T_i \setminus \{i\}} h_{ij} p_{ij} x_j,  \mbox{ with $p_{ij}$ as defined} \nonumber \\
 \Rightarrow & \quad x_i = 1 - \frac{\beta}{1+b} \sum_{j \in T_i \setminus \{i\}} h_{ij} p_{ij} x_j. \label{eq:cond-1}
\end{align}
  \noindent {\em Case 2:} $\lambda_i > 0$, then from Equation (\ref{eq:KKT-2}) we get $x_i = 0$, and from Equation (\ref{eq:KKT-1}), $\frac{\partial }{\partial x_i} u_i(x_i, x_{-i}) < 0$.
 Carrying out the differentiation as in Equation (\ref{eq:cond-1}) we get,
 \begin{align}
  0 = x_i &> 1 - \frac{\beta}{1+b} \sum_{j \in T_i \setminus \{i\}} h_{ij} p_{ij} x_j. \label{eq:cond-2}
 \end{align}
 \[\therefore \ x_i = \left [ 1 - \frac{\beta}{1+b} \sum_{j \in T_i \setminus \{i\}} h_{ij} p_{ij} x_j \right ]^+ .\]
 Since this condition has to hold for all nodes $i \in N$, the equilibrium profile $(x^*_i, x^*_{-i})$ must satisfy the above equality. 
%  The question now is whether a Nash equilibrium exists at all for this effort game. Let us recall the function $F$ from Definition~\ref{def:EUF}, which is the composition of the following two functions.
%   \[g_i(\mathbf{x}) = 1 - \frac{\beta}{1+b} \sum_{j \in T_i \setminus \{i\}} h_{ij} p_{ij} x_j, T(y) = \max (0,y). \]
%   We recall that, for the EP model, $p_{ij} = \prod_{k \in P_{i_- \to j}} \mu(C_k) e^{-\beta x_k}$. Therefore, the function $g_i$ is continuous. Also, $T$ is continuous, and $F_i \equiv T \circ g_i$. Therefore, $F_i$ is also continuous, for all $i \in N$. We also know that if for all $i \in N$, $F_i$ is continuous, then $F$ is continuous too. Therefore, we can apply Brouwer's fixed point theorem to conclude that the fixed point equation $\mathbf{x} = F(\mathbf{x})$ has at least one solution. For the effort game, any fixed point of this equation is a Nash equilibrium. Hence the equilibrium always exists for the effort game with EP model, and is given by Equation (\ref{eq:exponential-NE}).
\end{proof}

\subsection{Proof of Theorem~\ref{thm:exponential-sufficiency}}

We prove this theorem via the following Lemma.
\begin{lemma}
  \label{lem:contraction}
 If $\beta < \sqrt{\frac{1+b}{h_{\max}(T)}}$, the function $F$ is a contraction.
\end{lemma}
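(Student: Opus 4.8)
The plan is to establish that the effort-update map $F$ of Definition~\ref{def:EUF} is a contraction; combined with Theorem~\ref{thm:exp-NE-necessary} (which identifies the fixed points of $F$ with the Nash equilibria, each of the form in Equation~(\ref{eq:exponential-NE})) and the Banach fixed point theorem, this yields Theorem~\ref{thm:exponential-sufficiency} together with the convergence of best-response dynamics claimed in the text.

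First I would discard the truncation: since $t\mapsto[t]^+$ is nonexpansive, $|F_i(\mathbf{x})-F_i(\mathbf{y})|\le|G_i(\mathbf{x})-G_i(\mathbf{y})|$ with $G_i(\mathbf{x})=1-\frac{\beta}{1+b}\sum_{j\in T_i\setminus\{i\}}h_{ij}\,p_{ij}(\mathbf{x}_{P_{i_-\to j}})\,x_j$, so it suffices to control the Lipschitz constant of each $G_i$. The only nontrivial input is the elementary identity $\partial p_{ij}/\partial x_k=-\beta\,p_{ij}$ for every $k\in P_{i_-\to j}$, immediate from the product form $p_{ij}=\prod_{k\in P_{i_-\to j}}\mu(C_k)e^{-\beta x_k}$, together with the uniform bounds $p_{ij}\in[0,1]$ and $x_j\in[0,1]$. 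Because $x_i$ appears inside $p_{ij}$ (as $i\in P_{i_-\to j}$) but not in the explicit factor $x_j$, the ``self'' derivative is $\partial G_i/\partial x_i=\frac{\beta^2}{1+b}\sum_{j\in T_i\setminus\{i\}}h_{ij}\,p_{ij}\,x_j$, hence $|\partial G_i/\partial x_i|\le\frac{\beta^2}{1+b}\sum_{j\in T_i\setminus\{i\}}h_{ij}\le\frac{\beta^2 h_{\max}(T)}{1+b}$, which is strictly below $1$ exactly under the hypothesis $\beta<\sqrt{(1+b)/h_{\max}(T)}$. For a descendant coordinate $m\in T_i\setminus\{i\}$ the same two facts give $|\partial G_i/\partial x_m|\le\frac{\beta}{1+b}h_{im}p_{im}+\frac{\beta^2}{1+b}\sum_{j\in T_m\setminus\{m\}}h_{ij}\,p_{ij}\,x_j$.

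The step I expect to be the main obstacle is handling exactly these off-diagonal (descendant) terms: $F_i$ depends not only on $x_i$ but on the effort of every node in the subtree $T_i$ through the path products $p_{ij}$, so a crude $\ell^\infty$ operator-norm bound on $\nabla G$ picks up a contribution that grows with the depth of $T_i$ and is not controlled by $h_{\max}(T)$ alone. I would close this by leaning on the hierarchical structure. The cleanest route is backward induction: the leaves satisfy $x^*=1$ by Theorem~\ref{thm:exp-NE-necessary}, and processing nodes from the leaves up, by the time we reach node $i$ all $x_j^*$ with $j\in T_i\setminus\{i\}$ are already uniquely determined, so $x_i$ must be a fixed point of the \emph{one-dimensional} map $\psi_i$ on $[0,1]$ obtained from $F_i$ by freezing the descendants at their equilibrium values; its derivative is precisely the ``self'' term, so its Lipschitz constant is $\beta^2 h_{\max}(T)/(1+b)<1$, giving a unique $x_i^*$ and, inductively, a unique equilibrium.

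To obtain the literal statement ``$F$ is a contraction'' (which is what Lemma~\ref{lem:contraction} asserts, and what underlies the best-response claim), I would instead metrize $[0,1]^n$ by the weighted sup-norm $\|\mathbf{x}\|_\omega=\max_i\omega^{-d_i}|x_i|$, where $d_i$ is the depth of node $i$ and $\omega\in(0,1)$: since every $m\in T_i\setminus\{i\}$ is strictly deeper than $i$, the off-diagonal contributions to the $i$-th weighted row sum are damped by at least a factor $\omega$, so that sum is at most $\beta^2 h_{\max}(T)/(1+b)$ plus a quantity that tends to $0$ as $\omega\to0$; since the leading term is already $<1$, a small enough $\omega$ makes $F$ a $\|\cdot\|_\omega$-contraction on the complete space $([0,1]^n,\|\cdot\|_\omega)$, which proves the lemma (and hence, via Banach, Theorem~\ref{thm:exponential-sufficiency}). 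The remaining work is the routine bookkeeping that assembles the three derivative bounds above over the tree, and the example in Theorem~\ref{thm:tightness} confirms that the constant in the hypothesis cannot be weakened.
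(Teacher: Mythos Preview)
Your proposal is correct, and it is in fact more careful than the paper's own argument on precisely the point you flag as the main obstacle. The paper follows the same opening moves you do: it drops the truncation, applies the mean-value inequality $\|F(\mathbf{x})-F(\mathbf{y})\|\le\|g(\mathbf{x})-g(\mathbf{y})\|\le|\nabla g(\mathbf{x}_0)|\cdot\|\mathbf{x}-\mathbf{y}\|$, and computes the same diagonal entries $\partial g_i/\partial x_i=\frac{\beta^2}{1+b}\sum_{j\in T_i\setminus\{i\}}h_{ij}p_{ij}x_j\le\frac{\beta^2}{1+b}h_{\max}(T)$. To dispose of the off-diagonal terms, however, the paper simply asserts that because $\nabla g$ is upper triangular ``the diagonal elements are the singular values,'' and hence $|\nabla g(\mathbf{x}_0)|$ equals the largest diagonal entry. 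That step is not correct in general---the diagonal of a triangular matrix gives its \emph{eigenvalues}, not its singular values (consider $\bigl(\begin{smallmatrix}0&1\\0&0\end{smallmatrix}\bigr)$)---and the descendant derivatives $\partial g_i/\partial x_m$ you compute are genuinely nonzero, so the $\ell^2$ operator norm is not controlled by the diagonal alone. Your weighted sup-norm $\|\cdot\|_\omega$ is exactly the standard repair: for a triangular Jacobian one can always choose a norm whose induced operator norm is arbitrarily close to the spectral radius, and depth-weighting implements this on the tree so that the off-diagonal row sums are $O(\omega)$ while the diagonal contribution stays at $\frac{\beta^2}{1+b}h_{\max}(T)<1$. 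Your backward-induction alternative is also sound and mirrors the computation underlying Corollary~\ref{cor:node-NE-complexity}; as you note, it delivers uniqueness directly but not the literal contraction claim. In short, the skeleton matches the paper, but you correctly identify and close a gap the paper's proof leaves open.
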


\begin{proof}
 The Taylor series expansion of $g$ with a first order remainder term is as follows. There exists a point $\mathbf{x}_0$ that lies on the line joining $\mathbf{x}$ and $\mathbf{y}$, such that,
 \[g(\mathbf{x}) = g(\mathbf{y}) + \nabla g(\mathbf{x}_0) \cdot (\mathbf{x} - \mathbf{y}).\]
 Where, $\nabla g(\mathbf{x}_0)$ is the Jacobian matrix.
 \[
\nabla g(\mathbf{x}_0) = \left. \left( 
\begin{array}{ccc}
 \frac{\partial g_1}{\partial x_1} & \dots & \frac{\partial g_1}{\partial x_n} \\
 \vdots & \ddots & \\
 \frac{\partial g_n}{\partial x_1} & \dots & \frac{\partial g_n}{\partial x_n}
\end{array}
\right) \right|_{\mathbf{x}_0}
\]
 In order to show that $F$ is a contraction, we note that $F$ is a truncation of $g$. Hence, $||F(\mathbf{x}) - F(\mathbf{y})|| \leqslant ||g(\mathbf{x}) - g(\mathbf{y})||$, for all $\mathbf{x}, \mathbf{y} \in [0,1]^n$. Let us consider the following term,
 \begin{align}
   ||F(\mathbf{x}) - F(\mathbf{y})|| & \leqslant ||g(\mathbf{x}) - g(\mathbf{y})|| \leqslant |\nabla g(\mathbf{x}_0)| \cdot ||\mathbf{x} - \mathbf{y}|| \label{eq:contraction}
 \end{align}
 Where the matrix norm $|\nabla g(\mathbf{x}_0)|$ is the largest singular value of the Jacobian matrix $\nabla g(\mathbf{x}_0)$. We see that in our special structure in the problem, this matrix is upper triangular, hence the diagonal elements are the singular values. Suppose, the $k$-th diagonal element yields the largest singular value.
  \begin{align*}
   |\nabla g(\mathbf{x}_0)| &= \left. \frac{\partial g_k}{\partial x_k} \right|_{\mathbf{x}_0} = \left. \frac{\beta^2}{1+b} \sum_{j \in T_k \setminus \{k\}} h_{kj} p_{kj} x_j \right|_{\mathbf{x}_0} \\
  \Rightarrow \quad \sup_{x_0} |\nabla g(\mathbf{x}_0)| &\leqslant \frac{\beta^2}{1+b} \cdot h_{\max}(T) < 1, \quad \mbox{ since } \beta^2 < \frac{1+b}{h_{\max}(T)}.
  \end{align*}
 The first inequality above holds due to the fact that $p_{kj}$'s and $x_j$'s are $\leqslant 1$, and by the definition of $h_{\max}(T)$.
 Hence, from Equation (\ref{eq:contraction}), we get that $F$ is a contraction.
\end{proof}

\begin{proof}[of Theorem~\ref{thm:exponential-sufficiency}]
 We know from \Cref{thm:exp-NE-necessary} that a Nash equilibrium exists. Under the sufficient condition given by Lemma~\ref{lem:contraction}, the fixed point of $\mathbf{x} = F(\mathbf{x})$ is unique. Therefore, the Nash equilibrium is also unique, and is given by Equation (\ref{eq:exponential-NE}).
\end{proof}

\subsection{Proof of \Cref{cor:node-NE-complexity}}
\begin{proof}
 To compute the equilibrium production effort $x_i^*$, node $i$ needs to compute \Cref{eq:exponential-NE}. This requires to compute the equilibrium efforts for each node in his subtree $T_i$. Because of the fact that $x_i^*$ depends only on the equilibrium efforts of the subtree below $i$, we can apply the backward induction method starting from the leaves towards the root of this sub-hierarchy $T_i$. The worst-case complexity of such a backward induction occurs when the sub-hierarchy is a line. In such a case the complexity would be $|T_i| (|T_i| - 1) / 2 = O(|T_i|^2)$ --- to compute $x_i^*$ we need the equilibrium effort of every node below $i$ in the hierarchy, and each such node needs computation equal to its distance from the leaf of this line. 
 In order to compute the equilibrium efforts of the whole network, it is enough 
to determine the equilibrium effort at the root
because this would, in the process, determine
the equilibrium efforts of each node in the hierarchy. 
The worst-case complexity of finding the equilibrium effort at the root is $O(n^2)$ and therefore the worst-case complexity of computing the equilibrium efforts of the whole network is also $O(n^2)$.
\end{proof}

% % Appendix
% \appendix
% \section*{APPENDIX}
% \setcounter{section}{1}
% In this appendix, we measure
% the channel switching time of Micaz [CROSSBOW] sensor devices.
% In our experiments, one mote alternatingly switches between Channels
% 11 and 12. Every time after the node switches to a channel, it sends
% out a packet immediately and then changes to a new channel as soon
% as the transmission is finished. We measure the
% number of packets the test mote can send in 10 seconds, denoted as
% $N_{1}$. In contrast, we also measure the same value of the test
% mote without switching channels, denoted as $N_{2}$. We calculate
% the channel-switching time $s$ as
% \begin{eqnarray}%
% s=\frac{10}{N_{1}}-\frac{10}{N_{2}}. \nonumber
% \end{eqnarray}%
% By repeating the experiments 100 times, we get the average
% channel-switching time of Micaz motes: 24.3$\mu$s.
% 
% \appendixhead{ZHOU}
% 
% % Acknowledgments
% \begin{acks}
% The authors would like to thank David C. Parkes, Y. Narahari, and Panos Toulis
% for useful discussions on preparation of this paper.
% \end{acks}

% \begin{center}
%  {\sc Acknowledgements}
% \end{center}

% \section*{APPENDIX}

\section{Proofs of the price of anarchy results in balanced hierarchies}
\label{app:B}

\subsection{Proof of Theorem~\ref{thm:poa}}
We prove this theorem via the following lemma, which finds out the optimal effort profile for $\beta$ above a threshold.

\begin{lemma}[Optimal Efforts]
 \label{lem:optimal-effort}
 For a balanced $d$-ary hierarchy with depth $D$, any optimal effort profile has $x_{D+1}^{\text{OPT}} = 1$. When $-\ln \left( 1 - \frac{1}{d} \right) \leqslant \beta < \infty$, the optimal effort profile is $x_i^{\text{OPT}} = 0,\ \forall i = 1, \dots, D$, and $x_{D+1}^{\text{OPT}} = 1$. 
\end{lemma}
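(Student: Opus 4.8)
The plan is to exploit the level-symmetry of the balanced $d$-ary hierarchy (noted in Section~\ref{sec:poa}: at an optimum all nodes of a level choose the same effort) to collapse the maximization of $SO(\mathbf{x},T)$ into a one-dimensional problem per level, and then solve it by backward induction from the leaves. Number the levels $1,\dots,D+1$ from the root and let $x_m$ denote the common effort of every node at level $m$; with $\mu\equiv 1$ a level-$m$ node has productivity $e^{-\beta(x_1+\cdots+x_{m-1})}$ and there are $d^{m-1}$ of them, so
\[
SO(\mathbf{x},T)=\sum_{m=1}^{D+1} d^{m-1}\,e^{-\beta(x_1+\cdots+x_{m-1})}\,x_m .
\]
Letting $W_m$ be the output generated inside a single level-$m$ subtree when the productivity of its top node is normalized to $1$, one gets $W_{D+1}=x_{D+1}$, $SO(\mathbf{x},T)=W_1$ (since $p_1=1$), and the recursion $W_m = x_m + d\,e^{-\beta x_m}\,W_{m+1}$ for $m=1,\dots,D$. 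Because $W_{m+1}$ depends only on $x_{m+1},\dots,x_{D+1}$ and enters $W_m$ with the nonnegative coefficient $d\,e^{-\beta x_m}$, the optimization decouples: $W_m^{*}=\max_{x_m\in[0,1]}\bigl(x_m + d\,e^{-\beta x_m} W_{m+1}^{*}\bigr)$, where $W_{m+1}^{*}$ is the optimum of the analogous subproblem over levels $m+1,\dots,D+1$.

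The first assertion is immediate: each individual leaf effort enters $SO$ only through a strictly increasing linear term (its coefficient, the node's productivity, is positive), so every leaf effort equals $1$ at any optimum; in particular $x_{D+1}^{\text{OPT}}=1$ and $W_{D+1}^{*}=1$. For levels $m\le D$ the key point is that $x\mapsto x + c\,e^{-\beta x}$ has second derivative $c\beta^2 e^{-\beta x}\ge 0$, hence is convex on $[0,1]$, so each level's subproblem attains its maximum at an endpoint $x_m\in\{0,1\}$. Comparing the two endpoint values of $x_m + d\,e^{-\beta x_m}W_{m+1}^{*}$ shows that $x_m=0$ is (weakly) optimal exactly when $d\,W_{m+1}^{*}\ge 1 + d\,e^{-\beta}W_{m+1}^{*}$, i.e. when $\beta\ge -\ln\!\bigl(1-\tfrac{1}{d\,W_{m+1}^{*}}\bigr)$.

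It remains to unwind the induction under the hypothesis $\beta\ge -\ln(1-1/d)$. Inductively assume $W_{m+1}^{*}=d^{\,D-m}$ (base case $W_{D+1}^{*}=d^{0}=1$). Then the threshold at level $m$ equals $-\ln\!\bigl(1-1/d^{\,D-m+1}\bigr)$, and since $D-m+1\ge 1$ and $k\mapsto -\ln(1-1/d^{k})$ is nonincreasing in $k$ (for $d\ge 2$), this threshold is at most $-\ln(1-1/d)\le\beta$. Hence $x_m^{\text{OPT}}=0$ is optimal at level $m$ and $W_m^{*}=d\,W_{m+1}^{*}=d^{\,D-m+1}$, which closes the induction and yields the stated profile $x_1^{\text{OPT}}=\cdots=x_D^{\text{OPT}}=0$, $x_{D+1}^{\text{OPT}}=1$ (with $SO^{*}=W_1^{*}=d^{D}$). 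I expect the only delicate part to be this last bookkeeping step — checking that the single inequality $\beta\ge -\ln(1-1/d)$ simultaneously dominates the whole nested family of per-level thresholds — together with noting that at the boundary $\beta=-\ln(1-1/d)$ level $D$ has a tie between efforts $0$ and $1$, so the displayed profile is an optimum (unique once $\beta>-\ln(1-1/d)$, where every comparison is strict). The convexity observation, which lets us restrict each level to $\{0,1\}$ and thereby avoid interior critical points of $x+c\,e^{-\beta x}$ entirely, is what makes the argument clean.
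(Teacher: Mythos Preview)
Your argument is correct and follows essentially the same backward-induction route as the paper: reduce to the level-wise recursion $W_m=x_m+d\,e^{-\beta x_m}W_{m+1}$, set the leaves to $1$, and then show that under $\beta\ge -\ln(1-1/d)$ every intermediate level prefers $x_m=0$, using that the threshold $-\ln(1-1/d^{k})$ only weakens as $k$ grows. Your explicit convexity observation (so the per-level maximum is at an endpoint) is a small improvement in rigor over the paper's presentation, which asserts the inequality $x_D+d\,e^{-\beta x_D}\le d$ without justifying why no interior $x_D$ can beat the value at $0$.
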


\begin{proof}
 The social outcome for a given effort vector $\mathbf{x}$ on the balanced hierarchy is as follows. Since, the hierarchy is understood here, we use $SO(\mathbf{x})$ instead of $SO(\mathbf{x}, \texttt{BALANCED})$.
 \begin{align*}
  SO(\mathbf{x}) &= x_1 + d e^{-\beta x_1} x_2 + d^2 e^{-\beta (x_1 + x_2)} x_3 + \dots + d^D e^{- \beta (\sum_{i=1}^D x_i)} x_{D+1}.
 \end{align*}
 It is clear that for any effort profile of the other nodes the effort at the leaves that maximizes the above expression is $x_{D+1} = 1$. This proves the first part of the lemma. Hence we can simplify the above expression by,
 \begin{align}
  SO(\mathbf{x}) &= x_1 + d e^{-\beta x_1} x_2 + d^2 e^{-\beta (x_1 + x_2)} x_3 + \dots + d^D e^{-\beta (\sum_{i=1}^D x_i)} \nonumber \\
    &= x_1 + d e^{-\beta x_1} x_2 + \dots + d^{D-1} e^{-\beta (\sum_{i=1}^{D-1} x_i)} (x_D + d e^{-\beta x_D}) \label{eq:soc-output} \\
    &\leqslant x_1 + d e^{-\beta x_1} x_2 + \dots + d^{D-1} e^{-\beta (\sum_{i=1}^{D-1} x_i)} \cdot d . \nonumber
 \end{align}  
 The last inequality occurs since $\beta \geqslant -\ln \left( 1 - \frac{1}{d}\right)$, and $x_D = 0$ meets this inequality with a equality. Also since $\beta \geqslant -\ln \left( 1 - \frac{1}{d}\right)$ implies that $\beta \geqslant -\ln \left( 1 - \frac{1}{d^k}\right)$, for all $k \geqslant 2$, the next inequality will also be met by $x_{D-1} = 0$ as shown below.
 \begin{align*}
  SO(\mathbf{x}) &= x_1 + d e^{-\beta x_1} x_2 + \dots + d^{D-1} e^{-\beta (\sum_{i=1}^{D-1} x_i)} \cdot d \\
    &= x_1 + d e^{-\beta x_1} x_2 + \dots + d^{D-2} e^{-\beta (\sum_{i=1}^{D-2} x_i)} (x_{D-1} + d^2 e^{-\beta x_{D-1}}) \\
    &\leqslant x_1 + d e^{-\beta x_1} x_2 + \dots + d^{D-2} e^{-\beta (\sum_{i=1}^{D-2} x_i)} \cdot d^2 .
 \end{align*} 
 This inequality is also achieved by $x_{D-1} = 0$. We can keep on reducing the terms from the right in the RHS of the above equation, and in all the reduced forms, $x_i = 0, i = D-1, D-2, \dots, 1$ will maximize the social output expression. Hence proved.
\end{proof}

\begin{proof}[of Theorem~\ref{thm:poa}]
 {\em Case 1 ($0 \leqslant \beta \leqslant 1$):} From Lemma~\ref{lem:optimal-effort}, $x_{D+1} = 1$ for optimal effort. However, for any equilibrium effort profile $x_{D+1} = 1$ as well. Therefore we consider the equilibrium effort of the nodes at level $D$.
  \begin{equation}
    x_{D} = 1 - \frac{\beta}{1+b} d e^{-\beta x_D} h_{D,D+1}.   \label{eq:poa-equilibrium}
  \end{equation}
  The constraint for unique equilibrium demands that $d h_{D,D+1} \leqslant (1+b)/\beta^2$, which makes $\frac{\beta}{1+b} d h_{D,D+1} \leqslant 1/\beta$, while $1/\beta \geqslant 1$. So, we have the liberty of choosing the right $h_{D,D+1}$ to achieve any $x_D \in [0,1]$, and in particular, the $x_D^{\text{OPT}}$. We apply backward induction on the next level above.
  \[x_{D-1} = 1 - \frac{\beta}{1+b} [d e^{-\beta x_{D-1}} x_{D} h_{D-1,D} + d^2 e^{-\beta (x_{D-1} + x_{D})} h_{D-1,D+1} ]. \]
  The constraints are $d h_{D-1,D} + d^2 h_{D-1,D+1} \leqslant (1+b)/\beta^2$. We claim that any $x_{D-1} \in [0,1]$ is achievable here as well. To show that, put $h_{D-1,D} = 0$. The above equation becomes then,
  \begin{align*}
   x_{D-1} &= 1 - \frac{\beta}{1+b} d^2 e^{-\beta (x_{D-1} + x_{D})} h_{D-1,D+1} \\
	  &= 1 - \frac{\beta}{1+b} d^2 e^{-\beta x_{D-1}} \frac{1+b}{d \beta h_{D,D+1}} h_{D-1,D+1}, \mbox{ from the earlier expression} \\
	  &= 1 - \frac{d h_{D-1,D+1}}{h_{D,D+1}} e^{-\beta x_{D-1}}
  \end{align*}
  This again can satisfy any $x_{D-1}$, since the coefficient of the exponential term can be made anywhere between 0 and 1. It can be made 0 by choosing $h_{D-1,D+1} = 0$, and 1 by choosing $\frac{d h_{D-1,D+1}}{h_{D,D+1}} = 1$ which is feasible, since $d^2 h_{D-1,D+1} = d h_{D,D+1} \leqslant (1+b)/\beta^2$.
  
  In the similar way we can continue the induction till the root and can make $\mathbf{x}^* = \mathbf{x}^{\text{OPT}}$. Hence, PoA $=1$.

 {\em Case 2 ($1 < \beta < \infty$):} We note that this region of $\beta$ falls in the region specified by Lemma~\ref{lem:optimal-effort}. Hence the optimal effort is 1 for all the leaves and 0 for everyone else. Hence, the optimal social output is given by $d^D$. The equilibrium effort for the leaves, $x_{D+1} = 1$. However, Equation (\ref{eq:poa-equilibrium}) may not be satisfiable for any $x_D$ since $1/\beta < 1$. In order to push the solution as close to zero as possible, we choose $h_{D,D+1} = (1+b)/\beta^2$, and plug it in Equation (\ref{eq:poa-equilibrium}), and the solution is given by $\xi(\beta)$ (recall Equation (\ref{eq:xi-defn})) and the solution set is singleton under this condition. The social output is $d^D$, which is the numerator of the PoA expression. The denominator is given by the social output at the Nash equilibrium, which we will try to lower bound. From Equation (\ref{eq:soc-output}), for the equilibrium, we know that $x_D = \xi(\beta)$. Therefore, 
  \[x_D + d e^{-\beta x_D} = x_D + d \beta (1-x_D) = d \beta + (1 - d \beta) \xi(\beta).\]
 At the same time, we see that the leftmost expression is convex in $x_D$, which can be lower bounded by the minima, given by,
 \[ x_D + d e^{-\beta x_D} \geqslant \frac{1}{\beta} (1+\ln(d \beta)).\]
  Combining the two, a tight lower bound of the expression would be,
  \[x_D + d e^{-\beta x_D} \geqslant \max \left \{ \frac{1}{\beta} (1 + \ln (d \beta)), d \beta +  (1-d \beta) \xi(\beta) \right \} = \phi(d,\beta) .\]	
  Plugging this lower bound in Equation (\ref{eq:soc-output}), we see that,
  \begin{align*}
  \lefteqn{SO(\mathbf{x})} \\ 
    &\geqslant x_1 + d e^{-\beta x_1} x_2 + \dots + d^{D-1} e^{-\beta (\sum_{i=1}^{D-1} x_i)} \cdot \phi(d,\beta) \\
    &= x_1 + d e^{-\beta x_1} x_2 + \dots + d^{D-2} e^{-\beta (\sum_{i=1}^{D-2} x_i)} \cdot (x_{D-1} + d \phi(d,\beta) e^{-\beta x_{D-1}} )
%     &\geqslant x_1 + d e^{-\beta x_1} x_2 + \dots + d^{D-2} e^{-\beta (\sum_{i=1}^{D-2} x_i)} \cdot (x_{D-1} + d \phi(d,\beta) \frac{\beta}{\xi(\beta)} (1 - x_{D-1}) ) \\
%     &\geqslant x_1 + d e^{-\beta x_1} x_2 + \dots + d^{D-2} e^{-\beta (\sum_{i=1}^{D-2} x_i)} \cdot (x_{D-1} + d \phi(d,\beta) \beta (1 - x_{D-1}) ), \ \mbox{ as } \xi(\beta) \leqslant 1 \\
%     &= x_1 + d e^{-\beta x_1} x_2 + \dots + d^{D-2} e^{-\beta (\sum_{i=1}^{D-2} x_i)} \cdot d \phi(d,\beta) \beta + (1 - d \phi(d,\beta) \beta) x_{D-1} \\
%     &\geqslant x_1 + d e^{-\beta x_1} x_2 + \dots + d^{D-2} e^{-\beta (\sum_{i=1}^{D-2} x_i)} \cdot d \phi(d,\beta) \beta + (1 - d \phi(d,\beta) \beta) \xi(\beta)
%     &\geqslant x_1 + d e^{-\beta x_1} x_2 + \dots + d^{D-2} e^{-\beta (\sum_{i=1}^{D-2} x_i)} \cdot \phi(d \cdot \phi(d,\beta),\beta) \\
%     &\geqslant \vdots \\
%     &\geqslant t_D(d,\beta), \qquad \mbox{as defined in Equation (\ref{eq:t_D}).}
 \end{align*} 

 Let us consider the last term within parenthesis.
 \begin{align*}
  & x_{D-1} + d \phi(d,\beta) e^{-\beta x_{D-1}} \\
  &= x_{D-1} + d \phi(d,\beta) \frac{\beta}{\xi(\beta)} (1 - x_{D-1}) \\
  &\geqslant x_{D-1} + d \phi(d,\beta) \beta (1 - x_{D-1}) , \ \mbox{ as } \xi(\beta) \leqslant 1 \\
  &= d \phi(d,\beta) \beta + (1 - d \phi(d,\beta) \beta) x_{D-1} \\
  &\geqslant d \phi(d,\beta) \beta + (1 - d \phi(d,\beta) \beta) \xi(\beta) 
 \end{align*}

 The first equality comes since we can make the equilibrium $x_{D-1}$ s.t., $x_{D-1}  =  1 - \frac{\xi(\beta)}{\beta} e^{-\beta x_{D-1}}$, by choosing $d h_{D-1,D} = (1+b)/\beta^2, d^2 h_{D-1,D+1} = 0$. Also, since $\xi(\beta) \leqslant 1$, we conclude, $x_{D-1} \geqslant x_D = \xi(\beta)$, which gives the second inequality above.
 On the other hand, using the fact that the expression $x_{D-1} + d \phi(d,\beta) e^{-\beta x_{D-1}}$ is convex in $x_{D-1}$, it can be lower bounded by, $\frac{1}{\beta} (1+\ln (d \phi(d,\beta) \beta))$. Combining this and the above inequality, we get the following.
   \begin{align*}
  SO(\mathbf{x}) &\geqslant x_1 + d e^{-\beta x_1} x_2 + \dots + d^{D-2} e^{-\beta (\sum_{i=1}^{D-2} x_i)} \cdot \phi(d \cdot \phi(d,\beta),\beta) \\
    & \qquad \vdots \quad \mbox{ repeating the steps above} \\
    &\geqslant t_D(d,\beta), \qquad \mbox{as defined in Equation (\ref{eq:t_D}).}
 \end{align*} 
 Therefore the PoA $\leqslant \frac{d^D}{t_D(d,\beta)}$.
\end{proof}

\section{Proofs for general networks}
\subsection{Proof of Lemma \ref{lem:general-necessary-NE}}

\begin{proof}
 We follow the line of proof of Theorem~\ref{thm:exp-NE-necessary}. 
 Each agent $i \in N$ is solving the following optimization problem.
  \begin{align}
\label{eq:opt-necessary-genl}
\begin{array}{cc}
 \max_{x_i} & u_i(x_i, x_{-i}) \\
 \mbox{ s.t. } & 0 \leqslant x_i \leqslant 1
\end{array}
  \end{align}
 This is a non-linear optimization problem. Hence we can write down the Lagrangian as follows.
 \[{\cal L} = u_i(x_i, x_{-i}) + \lambda_i x_i + \gamma_i (1-x_i), \ \lambda_i, \gamma_i \geqslant 0.\]
 The KKT conditions are necessary for this optimization problem (\ref{eq:opt-necessary-genl}), which are the following.
 \begin{align}
  \frac{\partial {\cal L}}{\partial x_i} &= 0, \nonumber \\
  \Rightarrow \frac{\partial }{\partial x_i} u_i(x_i, x_{-i}) + \lambda_i - \gamma_i &= 0,  \label{eq:KKT-1-genl} \\
  \lambda_i x_i = 0, \quad \gamma_i (1-x_i) &= 0. \label{eq:KKT-2-genl}
 \end{align}
 \noindent {\em Case 1:} $\lambda_i = 0, \gamma_i = 0$, then from Equation (\ref{eq:KKT-1-genl}) we get, 
\begin{align}
 & \quad \frac{\partial }{\partial x_i} u_i(x_i, x_{-i}) = 0 \nonumber \\
 \Rightarrow & \quad p_i f'(x_i) + \sum_{j \in E_i} h_{ij} \frac{\partial p_j}{\partial x_i} x_j = 0 \nonumber \\
 \Rightarrow & \quad f'(x_i) = \sum_{j \in E_i} h_{ij} \left ( - \frac{1}{p_i} \frac{\partial p_j}{\partial x_i} x_j \right ) = g_i(\mathbf{x}) \nonumber \\
 \Rightarrow & \quad x_i = \ell \circ g_i(\mathbf{x}), \mbox{ from the definition of } \ell \label{eq:cond-1-genl}
\end{align}
  \noindent {\em Case 2:} $\lambda_i > 0, \gamma_i = 0$, then from Equation (\ref{eq:KKT-2-genl}) we get $x_i = 0$, and from Equation (\ref{eq:KKT-1-genl}),
  \[\frac{\partial }{\partial x_i} u_i(x_i, x_{-i}) < 0.\]
 Carrying out the differentiation as in Equation (\ref{eq:cond-1}), we get,
 \begin{align}
  f'(x_i) < g_i(\mathbf{x}) &\Rightarrow 0 = x_i > \ell \circ g_i(\mathbf{x}), \mbox{ since $f$ is concave} \nonumber \\
  \Rightarrow x_i &= T \circ \ell \circ g_i(\mathbf{x}), \mbox{ where $T$ is the truncation function.} \label{eq:cond-2-genl}
 \end{align}
  \noindent {\em Case 3:} $\lambda_i = 0, \gamma_i > 0$, then from Equation (\ref{eq:KKT-2-genl}) we get $x_i = 1$, and from Equation (\ref{eq:KKT-1-genl}),
  \[\frac{\partial }{\partial x_i} u_i(x_i, x_{-i}) > 0.\]
 Carrying out similar steps as before, we get,
 \begin{align}
  f'(x_i) > g_i(\mathbf{x}) &\Rightarrow 1 = x_i < \ell \circ g_i(\mathbf{x}) \nonumber \\
  \Rightarrow x_i &= T \circ \ell \circ g_i(\mathbf{x}), \mbox{ where $T$ is the truncation function.} \label{eq:cond-3-genl}
 \end{align}
 \noindent {\em Case 4:} $\lambda_i > 0, \gamma_i > 0$, this cannot happen since it will lead to a contradiction $0 = x_i = 1$.
 Therefore, combining Equations (\ref{eq:cond-1-genl}), (\ref{eq:cond-2-genl}), and (\ref{eq:cond-3-genl}), we get,
 \[x_i^* = T \circ \ell \circ g_i(\mathbf{x}^*), \ \forall i \in N.\]
 Hence proved.
\end{proof}

% \bibliography{master08042013}

\begin{thebibliography}{24}
\providecommand{\natexlab}[1]{#1}
\providecommand{\url}[1]{\texttt{#1}}
\expandafter\ifx\csname urlstyle\endcsname\relax
  \providecommand{\doi}[1]{doi: #1}\else
  \providecommand{\doi}{doi: \begingroup \urlstyle{rm}\Url}\fi

\bibitem[Allen and Hall(2007)]{Allen2007Innovation}
W.~D. Allen and T.~W. Hall.
\newblock {Innovation, Managerial Effort, and Start-up Performance}.
\newblock \emph{Journal of Entrepreneurial Finance, JEF}, 12\penalty0
  (2):\penalty0 87--118, 2007.
\newblock ISSN 1551-9570.
\newblock URL \url{http://hdl.handle.net/10419/55930}.

\bibitem[Anshelevich and Hoefer(2012)]{Anshelevich2012}
E.~Anshelevich and M.~Hoefer.
\newblock {Contribution Games in Networks}.
\newblock \emph{Algorithmica}, 63:\penalty0 51--90, 2012.
\newblock ISSN 0178-4617.

\bibitem[Ballester et~al.(2006)Ballester, Calv\'{o}-Armengol, and
  Zenou]{ballester06}
C.~Ballester, A.~Calv\'{o}-Armengol, and Y.~Zenou.
\newblock {Who's Who in Networks. Wanted: The Key Player}.
\newblock \emph{Econometrica}, 74\penalty0 (5):\penalty0 1403--1417, September
  2006.

\bibitem[Bhatt(2001)]{bhatt2001knowledge}
G.~D. Bhatt.
\newblock {Knowledge Management in Organizations: Examining the Interaction
  between Technologies, Techniques, and People}.
\newblock \emph{Journal of Knowledge Management}, 5\penalty0 (1):\penalty0
  68--75, 2001.

\bibitem[Bramoull\'{e} and Kranton(2007)]{Bramoulle2007}
Y.~Bramoull\'{e} and R.~Kranton.
\newblock {Public Goods in Networks}.
\newblock \emph{Journal of Economic Theory}, 135\penalty0 (1):\penalty0 478 --
  494, 2007.
\newblock ISSN 0022-0531.

\bibitem[Cronin et~al.(2015)Cronin, Acheson, Hern{\'a}ndez, and
  S{\'a}nchez]{cronin2015hierarchy}
K.~A. Cronin, D.~J. Acheson, P.~Hern{\'a}ndez, and A.~S{\'a}nchez.
\newblock Hierarchy is detrimental for human cooperation.
\newblock \emph{Scientific reports (Nature Publishing Group)}, 5, 2015.

\bibitem[Demaine et~al.(2009)Demaine, Hajiaghayi, Mahini, and
  Zadimoghaddam]{demaine2009price}
E.~D. Demaine, M.~Hajiaghayi, H.~Mahini, and M.~Zadimoghaddam.
\newblock {The Price of Anarchy in Cooperative Network Creation Games}.
\newblock \emph{ACM SIGecom Exchanges}, 8\penalty0 (2):\penalty0 2, 2009.

\bibitem[Galeotti et~al.(2010)Galeotti, Goyal, Jackson, Vega-Redondo, and
  Yariv]{Galeotti2010}
A.~Galeotti, S.~Goyal, M.~O. Jackson, F.~Vega-Redondo, and L.~Yariv.
\newblock {Network Games}.
\newblock \emph{The Review of Economic Studies}, 77\penalty0 (1):\penalty0
  218--244, 2010.

\bibitem[Garg and Narahari(2005)]{garg2005price}
D.~Garg and Y.~Narahari.
\newblock Price of anarchy of network routing games with incomplete
  information.
\newblock \emph{Internet and Network Economics (WINE 2005)}, pages 1066--1075,
  2005.

\bibitem[Gerhart et~al.(1995)Gerhart, Minkoff, and Olsen]{gerhart1995employee}
B.~A. Gerhart, H.~B. Minkoff, and R.~N. Olsen.
\newblock {Employee Compensation: Theory, Practice, and Evidence}.
\newblock \emph{CAHRS Working Paper Series}, page 194, 1995.

\bibitem[Harris and Raviv(2002)]{harris2002organization}
M.~Harris and A.~Raviv.
\newblock Organization design.
\newblock \emph{Management Science}, 48\penalty0 (7):\penalty0 852--865, 2002.

\bibitem[Jackson(2009)]{jackson2009networks}
M.~O. Jackson.
\newblock Networks and economic behavior.
\newblock \emph{Annual Review of Economics}, 1\penalty0 (1):\penalty0 489--511,
  2009.

\bibitem[Koutsoupias and Papadimitriou(1999)]{koutsoupias1999worst}
E.~Koutsoupias and C.~H. Papadimitriou.
\newblock {Worst-case Equilibria}.
\newblock In \emph{Proceedings of Symposium on Theoretical Aspects of Computer
  Science (STACS 99)}, pages 404--413. Springer, 1999.

\bibitem[Mookherjee(2010)]{Mookherjee2010}
D.~Mookherjee.
\newblock {Incentives in Hierarchies}.
\newblock preliminary version, prepared for Handbook of Organizational
  Economics, 2010.
\newblock URL \url{http://people.bu.edu/dilipm/wkpap/hndbkchap0810.pdf}.

\bibitem[Pickard et~al.(2011)Pickard, Pan, Rahwan, Cebrian, Crane, Madan, and
  Pentland]{pickard-etal11MIT}
G.~Pickard, W.~Pan, I.~Rahwan, M.~Cebrian, R.~Crane, A.~Madan, and A.~Pentland.
\newblock {Time-Critical Social Mobilization}.
\newblock \emph{Science}, 334\penalty0 (6055):\penalty0 509--512, October 2011.

\bibitem[Radner(1992)]{Radner92}
R.~Radner.
\newblock {Hierarchy: The Economics of Managing}.
\newblock \emph{Journal of Economic Literature}, pages 1382--1415, 1992.

\bibitem[Ravasz and Barab{\'a}si(2003)]{ravasz2003hierarchical}
E.~Ravasz and A.-L. Barab{\'a}si.
\newblock Hierarchical organization in complex networks.
\newblock \emph{Physical Review E}, 67\penalty0 (2):\penalty0 026112, 2003.

\bibitem[Rogers(2008)]{rogers2008strategic}
B.~Rogers.
\newblock {A Strategic Theory of Network Status}.
\newblock Technical report, Working paper, Nothwestern University, 2008.

\bibitem[Rosen(1965)]{rosen1965existence}
J.~B. Rosen.
\newblock Existence and uniqueness of equilibrium points for concave n-person
  games.
\newblock \emph{Econometrica: Journal of the Econometric Society}, pages
  520--534, 1965.

\bibitem[Roughgarden(2005)]{roughgarden2005selfish}
T.~Roughgarden.
\newblock \emph{Selfish routing and the price of anarchy}.
\newblock MIT press, 2005.

\bibitem[Rudin(1964)]{rudin1964principles}
W.~Rudin.
\newblock \emph{Principles of mathematical analysis}, volume~3.
\newblock McGraw-Hill New York, 1964.

\bibitem[Tichy et~al.(1979)Tichy, Tushman, and Fombrun]{tichy1979social}
N.~M. Tichy, M.~L. Tushman, and C.~Fombrun.
\newblock {Social Network Analysis for Organizations}.
\newblock \emph{Academy of Management Review}, pages 507--519, 1979.

\bibitem[Van~Alstyne(1997)]{van1997state}
M.~Van~Alstyne.
\newblock The state of network organization: a survey in three frameworks.
\newblock \emph{Journal of Organizational Computing and Electronic Commerce},
  7\penalty0 (2-3):\penalty0 83--151, 1997.

\bibitem[Watts and Peretti(2007)]{watts2007viral}
D.~J. Watts and J.~Peretti.
\newblock \emph{Viral marketing for the real world}.
\newblock Harvard Business School Publishing, 2007.

\end{thebibliography}
\end{document}